\documentclass[format=acmsmall, review=false]{acmart}
\usepackage{acm-ec-26}
\usepackage{booktabs} 
\usepackage[ruled]{algorithm2e} 
\usepackage{natbib} 

\newtheorem{remark}{Remark}
\newtheorem{assumption}{Assumption}
\SetAlFnt{\small}
\SetAlCapFnt{\small}
\SetAlCapNameFnt{\small}
\SetAlCapHSkip{0pt}
\IncMargin{-\parindent}

\setcitestyle{authoryear}

\title[Pricing Time, Not Just Tokens: Latency-Aware Mechanism Design for LLM Inference]{Pricing Time, Not Just Tokens: Latency-Aware Mechanism Design for LLM Inference}

\author{Ian McDougall}
\email{imcdougall@wisc.edu}
\orcid{0009-0005-4339-7233}
\affiliation{%
  \institution{University of Wisconsin-Madison}
  \city{Madison}
  \state{WI}
  \country{USA}
}
\author{Karthikeyan Sankaralingam}
\email{karu@cs.wisc.edu}
\orcid{0000-0002-8315-2389}
\affiliation{%
  \institution{University of Wisconsin-Madison}
  \city{Madison}
  \state{WI}
  \country{USA}
}
\additionalaffiliation{%
  \institution{NVIDIA}
  \city{Santa Clara}
  \state{CA}
  \country{USA}
}

\begin{abstract}
The economic theory of LLM pricing treats tokens as a homogeneous commodity considering aggregate token count as the main features buyers and sellers consider. We model inference as a service market where buyers have three-dimensional private information---willingness-to-pay, task volume, and time preference---and utility depends on \emph{latency slack} alongside token quantities.

Our main result is a \emph{separation theorem}: discrete hardware tiers induce endogenous self-selection on time preferences, reducing three-dimensional screening to standard one-dimensional screening within each tier. We derive the cost structure from GPU inference physics---compute-bound prefill and bandwidth-bound decode---and characterize optimal tiered mechanisms via virtual-value techniques. Optimal per-task prices are volume-independent, providing theoretical grounding for flat per-token API pricing.

We verify the mechanism empirically by calibrating to 8-GPU clusters of H100 and B200 hardware. The separation theorem holds in 83\% of 105 tested configurations overall, rising to 96\% at economically relevant WTP scales. A seller adopting two-tier pricing under the optimal mechanism captures 26--66\% higher profit than the best single-tier alternative, with gains driven by efficient cross-tier allocation in regimes where hardware costs are a significant fraction of per-request value.
\end{abstract}

\begin{document}

\maketitle

\section{Introduction}

The prevailing economic literature on Large Language Models (LLMs) treats inference as a spot market for tokens---a homogeneous commodity priced by volume with no temporal dimension. This framing, exemplified by \citet{bergemann2025economics}, assumes the buyer's production function depends solely on the total quantity of tokens consumed. Consequently, the singular metric of economic value is aggregate token count, and the question of \emph{when} those tokens are delivered plays no role in either buyer utility or seller cost.

We complement this throughput-dominated framework by modeling LLM inference as a \emph{service market with delivery time commitments}. The distinction is economically substantive: in a spot market for tokens, the seller's problem reduces to cost minimization over a fungible input; in a service market, the seller must jointly optimize over price, quality, and congestion. This reframing brings LLM pricing into dialogue with the classical economics of service industries---airlines, telecommunications, cloud computing---where time-sensitivity induces vertical differentiation and screening on latency preferences.

The operational reality of commercial inference supports this reframing. Cloud providers have broken the flat-rate paradigm: AWS Bedrock offers ``Priority Tier'' pricing guaranteeing faster generation, while Google Vertex AI and Azure OpenAI sell ``Provisioned Throughput'' capacity.\footnote{The premium on inference speed is substantial: a recent \$20 billion Nvidia--Groq licensing agreement \citep{nellis_2025} is predicated on Groq's \emph{architecturally specialized} inference hardware, which achieves latency reductions~\citep{Groq2024InferenceSpeed} through fundamentally different memory design (SRAM-based, eliminating the HBM bandwidth bottleneck) rather than incremental GPU generational improvement. Claude also announced the release of their Opus 4.6 LLM with ``Fast Mode'', which improves latency without compromising quality, indicating similar trends in the LLM design space \citep{claude26}.}

\begin{figure}[tbp]
    \centering
    \includegraphics[width=0.9\linewidth]{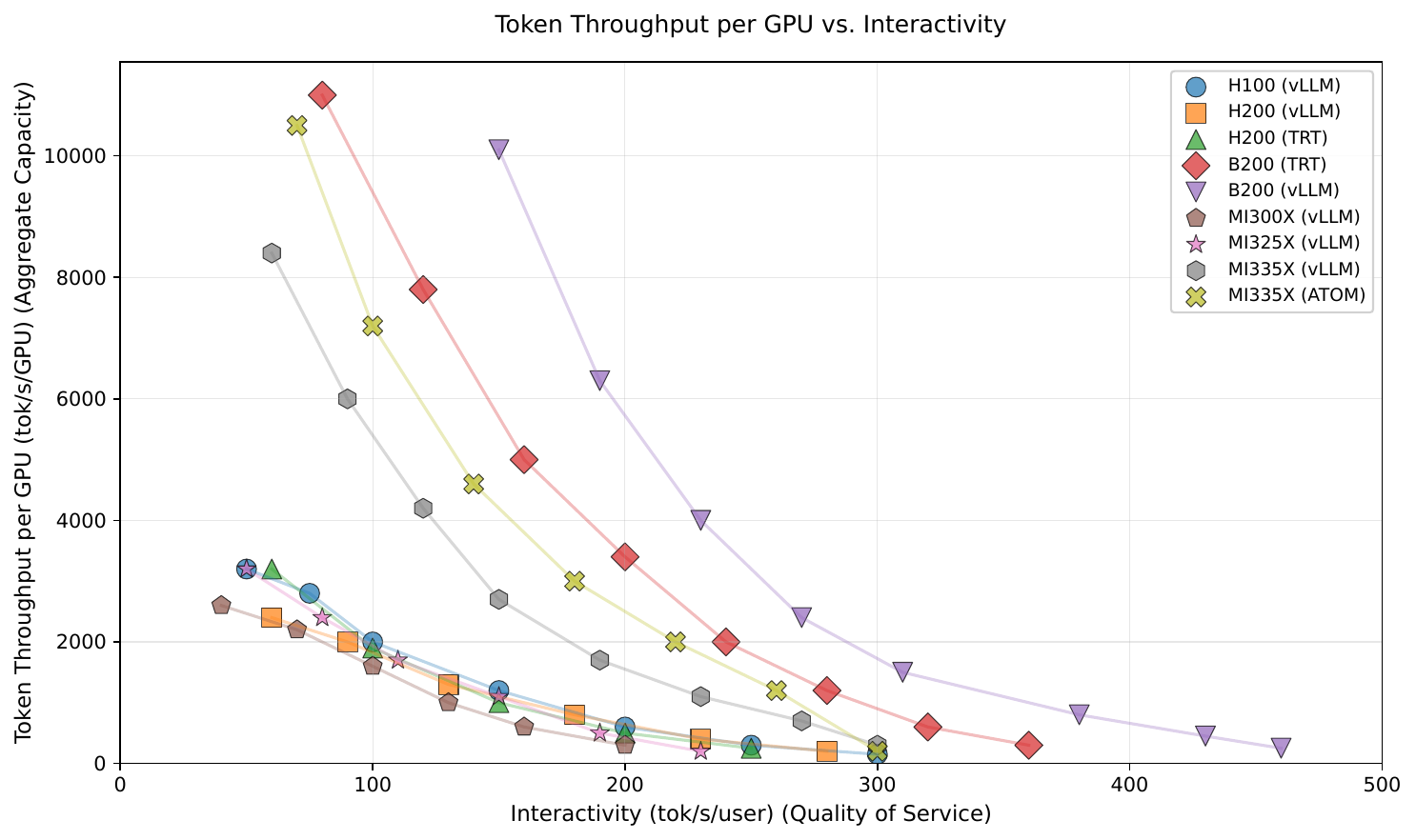}
    \caption{\textbf{The throughput--latency production frontier.} Each curve traces feasible combinations for a GPU architecture: per-user quality (tokens/second/user) versus aggregate utilization (tokens/second/GPU). The negative slope reflects capacity constraints: faster service requires reserving resources that could otherwise be pooled. Data from \citet{Chen_Patel_Nishball_Quilici_Wen_2025}.}
    \label{fig:throughput-vs-latency}
\end{figure}

Figure~\ref{fig:throughput-vs-latency} illustrates this frontier. For GPU-based inference, a fundamental tension exists between throughput (aggregate tokens processed) and latency (delivery speed). The seller faces a choice between returns to specialization (faster service per buyer) and returns to scale (more buyers served via statistical multiplexing). In the regime studied by \citet{bergemann2025economics}, returns to scale dominate. Market evidence suggests latency-dominated regimes are also worth exploring: newer generations of GPU hardware demand premiums despite comparable throughput, and chip designers invest heavily in high-bandwidth interconnects for strong scaling effects. Observing throughput-maximization markets alone cannot explain why buyers pay for performance or why tiered offerings proliferate.

\subsection{Primary Contributions}

We develop a mechanism design framework where a monopolistic seller offers LLM inference services to buyers with three-dimensional private information: willingness-to-pay $w$, task volume $s$, and time preference $\gamma$. The buyer's utility function incorporates \emph{latency slack}---the gap between maximally tolerable and actual delivery time---as a first-order argument alongside token quantities. This specification captures the temporal dimension that volume-centric models omit.

Our first contribution derives an empirically-grounded cost structure from the production technology of GPU-based inference. We decompose latency into compute-bound (prefill) and memory-bound (decode) phases, yielding a cost function that is strictly convex in service quality, with marginal cost diverging as latency approaches hardware limits. This convexity reflects the fundamental physics: achieving low latency requires sacrificing batching efficiency, and the tradeoff becomes increasingly severe near capacity constraints. The cost structure satisfies the technical conditions required for Mussa-Rosen screening while remaining empirically calibrated to measured hardware performance.

Our second contribution establishes a \emph{separation theorem} that reduces the intractable three-dimensional screening problem to a tractable one. We prove that when hardware tiers differ sufficiently in latency capabilities, a single-crossing property in time preferences induces endogenous self-selection: high-$\gamma$ (time-sensitive) buyers strictly prefer premium tiers, while low-$\gamma$ (time-tolerant) buyers select economy tiers. The key technical condition is that the quality gain from upgrading tiers is strictly increasing in $\gamma$---a property we verify holds across empirically calibrated parameters. This separation transforms the problem into standard one-dimensional screening on willingness-to-pay within each tier, with the time-preference dimension screened through tier choice at zero informational cost to the seller.

Our third contribution characterizes the optimal mechanism using virtual-value techniques adapted to our convex cost structure. Within each tier, quality schedules satisfy the Mussa-Rosen first-order condition equating virtual value with marginal cost; buyers below a threshold are excluded. The menu exhibits tiered pricing that mirrors observed practice: premium tiers charge for the \emph{right to not batch aggressively}---valuable precisely to time-sensitive buyers. We establish that optimal per-task prices are independent of volume: the cost structure $C(Q,s) = s \cdot \tilde{c}(Q/s)$ implies no quantity discounts, providing theoretical grounding for the flat per-token pricing dominant in current API markets.

We extend the analysis to continuous $\gamma$ distributions, where $K$ discrete hardware tiers induce an endogenous partition of the time-preference space. Numerical analysis reveals a sharp regime dependence: when the premium tier is a generational upgrade---uniformly faster at a modest cost premium---it dominates all buyer types at high per-request values, and single-tier pricing suffices. When hardware costs are a significant fraction of per-request value, our tiered mechanism delivers profit gains of 26--66\%, because efficient cross-tier allocation has proportionally larger impact on margins. As LLM inference matures from a high-margin novelty into a ubiquitous utility, hardware costs consume an increasing share of per-request value, making tiered pricing increasingly decisive.

\subsection{Related Work}
\label{sec:related}

\paragraph{Economics of AI Services.} \citet{bergemann2025economics} formalize token allocation for fine-tuning and inference but target throughput-maximizing regimes where timing is secondary; \citet{dutting2025mechanism} price generative-AI content rather than the inference service. We complement both by isolating the regime where \textit{latency heterogeneity} governs utility, introducing a three-dimensional screen over willingness-to-pay, volume, and time preference. Broader AI economics \citep{korinek2024ai, acemoglu2024learning, agrawal2024economic} does not model inference pricing.

\paragraph{Versioning and Multidimensional Screening.} \citet{varian2000versioning} shows a ``damaged'' (higher-latency) good enables segmentation when marginal costs are near zero; here latency arises endogenously from hardware and batching rather than being injected, paralleling the ``Paris Metro Pricing'' of \citet{odlyzko1999paris}. Multidimensional screening is generally intractable, requiring complex ironing \citep{rochet1998ironing, armstrong1996multiproduct}; our separation theorem circumvents this, reducing the problem to one-dimensional Mussa-Rosen screening within each tier \citep{mussa1978monopoly, myerson1981optimal}. Our volume-independent per-task prices contrast with the quantity discounts of standard nonlinear pricing \citep{maskin1984monopoly}, since the marginal cost of latency slack scales linearly with volume.

\paragraph{Systems Foundations.} Our cost structure derives from the bifurcation of GPU inference into compute-bound prefill and bandwidth-bound decode, building on the continuous batching of Orca \citep{yu2022orca} and PagedAttention of vLLM \citep{kwon2023efficient}. Although we characterize tiers via monolithic GPU performance (H100 vs.\ B200), the framework extends to prefill-decode disaggregation \citep{patel2024splitwise}, where the premium tier abstracts a low-latency decode pool and the economy tier a throughput-optimized one. Specialized accelerators such as TPUs \citep{jouppi2017datacenter} shift the latency coefficients but reinforce the batching-versus-slack tradeoff driving separation.

\paragraph{Service Operations, SLOs, and Timing.} Emerging reasoning models (OpenAI o1, DeepSeek R1) push computation to test-time, producing latency-insensitive decode; our mechanism captures exactly this regime, where hardware costs are significant but latency sensitivity is low. We position the work as the economic dual of the Service Level Objective literature: Clockwork \citep{gujarati2020clockwork} and Shepherd \citep{zhang2023shepherd} treat latency as a hard constraint, whereas we internalize it into the objective, exposing the extractable value of the slack that aggressive scheduling discards. This generalizes the ``FedEx problem'' of \citet{fiat2016fedex} to continuous token streaming, replacing discrete deadlines with smooth $\gamma$ and abstracting from queueing dynamics \citep{mendelson1990optimal, hassin2003queue} to isolate the structural signaling of speed \citep{debo2005signaling}. Engineering foundations draw on \citet{MaPatterson2026LLMInferenceHardware} and \citet{davies2025liminal}.

\section{Model}
\label{sec:model}

\subsection{Environment}

We study a market for LLM inference services between a monopolist seller and a unit measure of buyers with heterogeneous preferences. Each buyer privately learns their type $\theta = (w, s, \gamma)$---willingness-to-pay, task volume, and time preference---then selects from a menu of contracts posted by the seller. Production technology is common knowledge; all parties have quasilinear preferences.

A buyer of type $\theta = (w, s, \gamma)$ arrives with $s$ tasks requiring LLM processing, indexed by $i \in \{1, \ldots, s\}$. For each task $i$, the buyer selects input tokens $x_i \geq 0$ and output tokens $y_i \geq 0$. Tasks are processed in batches of size $B \in \{1, \ldots, s\}$ on hardware tier $g \in \mathcal{G}$, where $\mathcal{G}$ denotes the set of available tiers ordered by capability. The constraint $B \leq s$ reflects that batch size cannot exceed the number of available tasks.

\subsection{Buyer Preferences}
\label{sec:buyer-preferences}

The value a buyer derives from task $i$ depends on both the tokens allocated and the speed of delivery. We specify preferences through a production function over inputs and latency:
\begin{equation}
    v(x, y; L, \gamma) = x^{\alpha} y^{\beta} (L_{\max} - L)^{\gamma},
    \label{eq:production}
\end{equation}
where $\alpha, \beta > 0$ are common parameters governing returns to token inputs, $\gamma > 0$ is the buyer's time-preference intensity, and $L_{\max}$ is the buyer's maximum tolerable latency. We impose $\alpha + \beta + \gamma < 1$ for all $\gamma$ in the support, ensuring diminishing returns to the composite input bundle.

This Cobb-Douglas specification with a deadline constraint follows the literature on time-sensitive service systems \citep{fiat2016fedex}. The term $(L_{\max} - L)$ captures \emph{latency slack}---the margin between deadline and delivery. The exponent $\gamma$ governs time-preference intensity: high-$\gamma$ buyers experience rapid utility loss as slack diminishes (real-time applications), while low-$\gamma$ buyers tolerate delay (batch workloads).

The total utility for a buyer of type $\theta = (w, s, \gamma)$ who receives allocations $(x_i, y_i)_{i=1}^s$, experiences latency $L$, and pays transfer $T$ is:
\begin{equation}
    u(\theta) = w \sum_{i=1}^{s} v(x_i, y_i; L, \gamma) - T.
    \label{eq:utility}
\end{equation}
Since $v$ is strictly concave and all tasks share latency $L$, the optimal allocation is symmetric ($x_i = x$, $y_i = y$ for all $i$), so we work with per-task allocations without loss of generality \citep{bergemann2025economics}.

\subsection{Production Technology: Latency}
\label{sec:latency}

Unlike tokens, latency is a batch-level phenomenon: all tasks in a batch experience common delay $L = L(B, x, y, g)$, where $B$ is batch size and $g \in \mathcal{G}$ is the hardware tier.

The latency function captures two sources of delay: (i)~\emph{setup cost} (prefill phase)---the time to initialize computation, scaling superlinearly in input complexity due to $O(n^2)$ attention; and (ii)~\emph{marginal service time} (decode phase)---per-token processing that increases with batch size due to congestion externalities as co-batched requests compete for memory bandwidth.

Our empirical analysis (Section~\ref{sec:latency-model}) yields a latency function of the form:
\begin{equation}
    L(B, x, y, g) = \underbrace{a_g x^2 + b_g x + c_g}_{\text{prefill (setup cost)}} + \underbrace{y(d_g + e_g B + f_g Bx)}_{\text{decode (marginal service time)}},
    \label{eq:latency}
\end{equation}
where $(a_g, b_g, c_g, d_g, e_g, f_g)$ are hardware-specific parameters estimated from measured GPU latencies.\footnote{For notational simplicity, we suppress the $g$ subscript and write $L(B, x, y)$ when the hardware tier is clear from context.}

The prefill phase does not depend on batch size $B$: modern continuous-batching systems process prefill requests individually \cite{kwon2023efficient}. The decode phase is memory-bandwidth-bound, with terms $e_g B$ and $f_g Bx$ capturing congestion as batched sequences compete for bandwidth. This formulation embodies the fundamental tension: larger batches amortize setup costs but increase latency through congestion, with slack $L_{\max} - L$ decreasing in $(B, x, y)$.

\subsection{Type Space and Distributional Assumptions}
\label{sec:types}

Buyer heterogeneity spans three dimensions: willingness-to-pay $w \in \mathbb{R}_+$, task volume $s \in \mathbb{N}^+$, and time preference $\gamma \in \Gamma \subseteq (0, 1 - \alpha - \beta)$. The full type space is $\Theta = \mathbb{R}_+ \times \mathbb{N}^+ \times \Gamma$.

\begin{assumption}[Independence]
\label{ass:independence}
The type components $w$, $s$, and $\gamma$ are independently distributed according to $F_w$, $F_s$, and $F_\gamma$ respectively.
\end{assumption}

\begin{assumption}[Regularity]
\label{ass:regularity}
The distribution $F_w$ admits a continuous density $f_w$ on $\mathbb{R}_+$ with monotone hazard rate:
\begin{equation}
    \frac{d}{dw} \left[ \frac{1 - F_w(w)}{f_w(w)} \right] \leq 0 \quad \text{for all } w > 0.
\end{equation}
\end{assumption}

Assumption~\ref{ass:independence} permits analysis of each screening dimension in partial isolation. Assumption~\ref{ass:regularity} ensures that the virtual value $\varphi(w) \equiv w - (1 - F_w(w))/f_w(w)$ is strictly increasing \citep{myerson1981optimal}.

Our baseline assumes binary time preference: $\gamma \in \{\gamma_L, \gamma_H\}$ with $0 < \gamma_L < \gamma_H < 1 - \alpha - \beta$, capturing the bimodality between latency-critical and latency-tolerant workloads. Section~\ref{sec:continuous} extends to continuous $\gamma$. Our main result (Theorem~\ref{thm:separation}) establishes that the optimal mechanism induces self-selection: high-$\gamma$ buyers prefer the premium tier, low-$\gamma$ buyers prefer economy, reducing the three-dimensional problem to one-dimensional screening on $w$ within each tier.

\subsection{Seller's Cost Structure}
\label{sec:cost-structure}

The seller incurs cost $c_g \cdot L$ for running a batch of duration $L$ on tier $g$. Processing $s$ tasks in batches of size $B$ yields total cost:
\begin{equation}
    \text{Cost}(s, B, x, y, g) = \frac{s}{B} \cdot c_g \cdot L(B, x, y, g).
    \label{eq:total-cost}
\end{equation}
The ratio $L/B$ is per-task machine time, which is non-monotonic in $B$: larger batches reduce setups but increase per-batch latency, driving the fundamental tradeoff between batching efficiency and service speed.

The \emph{per-task cost function} is the minimum cost to deliver quality $q$ on tier $g$:
\begin{equation}
    \tilde{c}^*(q; g, \gamma) = \min_{x, y, B} \left\{ \frac{c_g \cdot L(B, x, y, g)}{B} \;\middle|\; v(x, y; L, \gamma) = q, \; L < L_{\max} \right\}.
    \label{eq:cost-function}
\end{equation}
After $\gamma$-separation (Section~\ref{sec:two-types}), each tier serves a single $\gamma$ type, permitting notation $\tilde{c}^*(q; g)$. Theorem~\ref{thm:cost-convexity} establishes that $\tilde{c}^*(q; g)$ is strictly convex with $\tilde{c}^{*\prime}(0) = 0$ and $\tilde{c}^{*\prime}(q) \to \infty$ as $q \to \bar{q}(g)$, ensuring the Mussa-Rosen approach yields well-behaved solutions.

For $s$ tasks at aggregate quality $Q = s \cdot q$, the total cost function takes the form:
\begin{equation}
    C(Q, s; g) = s \cdot \tilde{c}^*(Q/s; g).
    \label{eq:aggregate-cost}
\end{equation}
This structure implies that marginal cost depends only on per-task quality $q = Q/s$, not on scale $s$ directly---a property with important implications for quantity discounts.
\section{Optimal Menu Pricing}
\label{sec:optimal-menu}

Having established the primitives of buyer preferences and seller costs, we turn to the mechanism design problem. A monopolist offering LLM inference services must screen buyers who differ along three dimensions: willingness-to-pay $w$, task volume $s$, and time preference $\gamma$. Unlike standard token-pricing models, our setting requires the seller to jointly determine computational resources---not merely quantities---since identical tokens delivered faster command higher value from time-sensitive buyers.

Formally, the seller commits to a direct mechanism assigning, for each reported type $\theta = (w, s, \gamma)$, a complete specification of service delivery:
\begin{equation}
    \left( (x_i(\theta), y_i(\theta))_{i \in \mathbb{N}^+}, B(\theta), g(\theta), T(\theta) \right)_{\theta \in \Theta}.
    \label{eq:mechanism}
\end{equation}
This includes per-task token allocations $(x_i, y_i)$, batch size $B$, hardware tier $g \in \mathcal{G}$, and monetary transfer $T$. Allocations are contractible: once a buyer selects a menu option, reallocation of tokens across tasks is precluded. Payoffs are quasilinear in the transfer.

Multi-dimensional screening is notoriously difficult \citep{armstrong1999multi}. However, the economics of inference---specifically, the discrete nature of hardware and the structure of latency costs---admits a clean decomposition. First, discrete hardware tiers induce self-selection on $\gamma$: high-$\gamma$ buyers strictly prefer premium (low-latency) hardware while low-$\gamma$ buyers prefer economy tiers, revealing time preference without explicit screening (Theorem~\ref{thm:separation}). In the direct mechanism~\eqref{eq:mechanism}, no buyer benefits from misreporting $\gamma$, as the seller's optimal tier assignment coincides with each buyer's preferred tier. Second, within each tier the problem reduces to classical one-dimensional screening on $w$, amenable to Mussa-Rosen techniques \citep{mussa1978monopoly}.

We develop these results for binary time-preference types (Section~\ref{sec:two-types}), which yields sharp closed-form characterizations. Section~\ref{sec:continuous} extends the analysis to continuous $\gamma$ distributions, introducing bunching across tiers. First, we briefly describe the empirical latency function underlying our cost structure.

\subsection{Latency Function Estimation}
\label{sec:latency-model}

We estimate the latency function~\eqref{eq:latency} from measured GPU latency data. The specification decomposes inference into compute-bound prefill and bandwidth-bound decode phases, with hardware-specific parameters $(a_g, b_g, c_g, d_g, e_g, f_g)$ estimated via nonlinear regression. Both fits achieve $R^2 > 0.99$ across the operational range (batch sizes 1--64, input contexts 1K--8K tokens), indicating the parametric form captures the essential structure of inference latency. Details of the simulation methodology and estimated coefficients appear in Appendix~\ref{app:empirical}.

\subsection{Binary Time-Preference Types}
\label{sec:two-types}

We begin with the case where time preference takes one of two values: $\gamma \in \{\gamma_L, \gamma_H\}$ with $0 < \gamma_L < \gamma_H < 1 - \alpha - \beta$. Let $\gamma = \gamma_H$ with probability $\pi_H$ and $\gamma = \gamma_L$ with probability $\pi_L = 1 - \pi_H$. This binary specification captures the pronounced bimodality observed in practice---time-critical applications (real-time agents, interactive assistants) versus time-tolerant workloads (batch processing, offline analysis)---while maintaining analytical tractability.

The seller faces a three-dimensional screening problem over $(w, s, \gamma)$. We establish that discrete hardware tiers enable \emph{endogenous separation} of the $\gamma$ dimension, reducing the problem to one-dimensional screening on willingness-to-pay $w$ within each tier.

\subsubsection{Hardware Tiers and Latency Slack}

The seller offers two hardware tiers: a \emph{premium tier} $g_H$ and an \emph{economy tier} $g_L$. These tiers differ in their cost structure and latency characteristics:
\begin{itemize}
    \item \textbf{Premium tier} $g_H$: Higher per-unit-time cost $c_{g_H}$, but capable of achieving lower latency $L_{g_H}$ for any configuration $(B, x, y)$.
    
    \item \textbf{Economy tier} $g_L$: Lower per-unit-time cost $c_{g_L} < c_{g_H}$, but with higher latency $L_{g_L} > L_{g_H}$.
\end{itemize}

For a buyer of type $\gamma$ selecting tier $g$, the optimal within-tier configuration $(x^*, y^*, B^*)(\gamma, g)$ yields an equilibrium latency slack $\mu_g$. The premium tier achieves greater slack: $\mu_{g_H}(\gamma) > \mu_{g_L}(\gamma)$ for all $\gamma$.

\subsubsection{The Separation Condition}

The key to reducing the multi-dimensional problem lies in establishing that $\gamma$ types self-select into appropriate tiers without explicit screening. The following condition is sufficient for separation.

\begin{assumption}[Quality-Difference Separation]
\label{ass:separation}
Let $q_g(\gamma)$ denote the per-task quality achieved by type $\gamma$ on tier $g$ at the cost-minimizing allocation. The quality gain from upgrading to the premium tier is strictly increasing in time preference:
\begin{equation}
    q_{g_H}(\gamma_H) - q_{g_L}(\gamma_H) > q_{g_H}(\gamma_L) - q_{g_L}(\gamma_L).
    \label{eq:separation-condition}
\end{equation}
\end{assumption}

This condition states that high-$\gamma$ buyers benefit more from premium hardware than low-$\gamma$ buyers---not merely in proportional terms, but in absolute quality units. When this holds, there exist tier prices that induce perfect self-selection: each $\gamma$ type strictly prefers the tier designed for it.

We now state the main separation result.

\begin{theorem}[$\gamma$-Separation]
\label{thm:separation}
Under Assumption~\ref{ass:separation}, for each $w > 0$ there exist tier transfers $(T_{g_H}, T_{g_L})$ such that:
\begin{enumerate}
    \item Type $\gamma_H$ strictly prefers the premium tier $g_H$;
    \item Type $\gamma_L$ strictly prefers the economy tier $g_L$;
    \item No type benefits from deviating to a tier designed for another $\gamma$ type.
\end{enumerate}
\end{theorem}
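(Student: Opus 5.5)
The argument treats the separation as a pure incentive-compatibility problem across two discrete options, solved by choosing the price gap between tiers. Fix $w>0$ and a volume $s$. A buyer of type $\gamma$ who selects tier $g$ receives the cost-minimizing per-task quality $q_g(\gamma)$ from Assumption~\ref{ass:separation}. Because utility~\eqref{eq:utility} is linear in quality and symmetric across tasks, the gross value of that choice is $w s\, q_g(\gamma)$. The payoff from tier $g$ is therefore $U(\gamma,g)=w s\, q_g(\gamma)-T_g$, and only the difference $\Delta T \equiv T_{g_H}-T_{g_L}$ matters for the tier choice. Writing $D(\gamma)\equiv q_{g_H}(\gamma)-q_{g_L}(\gamma)$, we have:
\begin{itemize}
\item type $\gamma_H$ strictly prefers $g_H$ if and only if $w s\, D(\gamma_H) > \Delta T$;
\item type $\gamma_L$ strictly prefers $g_L$ if and only if $w s\, D(\gamma_L) < \Delta T$.
\end{itemize}

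Assumption~\ref{ass:separation} says exactly $D(\gamma_H)>D(\gamma_L)$. Since $w s>0$, the open interval $\bigl(w s\,D(\gamma_L),\, w s\,D(\gamma_H)\bigr)$ is nonempty. I would take $\Delta T$ to be its midpoint, which gives items 1 and 2 with strict inequalities. I would then fix the level $T_{g_L}$ low enough, for instance $T_{g_L}\le w s\, q_{g_L}(\gamma_L)$, so that participation also holds. This guarantees that "preferring the tier" is not undercut by the outside option. The participation constraint of $\gamma_H$ then follows from its IC constraint together with $q_{g_L}(\gamma_H)\ge 0$.

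Item 3 is the two IC inequalities restated. A deviation by $\gamma_H$ to the tier designed for $\gamma_L$ yields $U(\gamma_H,g_L)<U(\gamma_H,g_H)$, and symmetrically for $\gamma_L$. One subtlety needs care: a deviator to a tier receives the allocation designed for the other type, namely the tokens and batch size $(x^*,y^*,B^*)(\gamma',g)$. It does not receive its own cost-minimizing configuration on that tier. I would handle this by interpreting $q_g(\gamma)$ as the quality type $\gamma$ obtains from the tier-$g$ contract as offered. Alternatively, I would note that when the deviator gets the other type's contract, its realized quality is weakly below its own optimum on that tier, since the latency slack is the same but evaluated at a different exponent. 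Only deviation payoffs fall under this interpretation, so the IC inequalities are preserved, and they only become easier to satisfy.

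I expect the main obstacle to be this bookkeeping about what a deviator actually receives, together with the sign of $(L_{\max}-L)^{\gamma}$ across $\gamma$. When the slack is below $1$, a larger $\gamma$ lowers $v$, so the comparison is not automatic; this is precisely why the proof leans entirely on the assumed difference condition~\eqref{eq:separation-condition} rather than on a monotonicity argument. A second point to check is that the transfers may depend on $w$ and $s$. The statement is per-$w$, so this is allowed. Because the threshold interval scales linearly in $w s$, I would set $\Delta T = w s\,\bar D$ for any fixed $\bar D\in(D(\gamma_L),D(\gamma_H))$. This shows that a per-task price premium independent of $w$ and $s$ already achieves separation, which makes the reduction to within-tier screening on $w$ clean.
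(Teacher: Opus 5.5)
This is the paper's argument: the two cross-tier IC constraints confine the gap $T_{g_H}-T_{g_L}$ to an interval (per task in the paper, scaled by $s$ in yours) that is nonempty exactly under condition~\eqref{eq:separation-condition}. Your choice of an interior point is what actually yields the \emph{strict} preferences the statement claims, which the paper's weak-inequality version leaves implicit. Your extra remarks each overreach and should be trimmed: the fallback claim that a deviator given the other type's configuration gets weakly less than $q_g(\gamma)$ does not follow, since $q_g(\gamma)$ is attained at a cost-minimizing rather than quality-maximizing allocation; the bound $T_{g_L}\le ws\,q_{g_L}(\gamma_L)$ secures $\gamma_H$'s participation only if also $T_{g_L}\le ws\,q_{g_L}(\gamma_H)$; and $\Delta T = ws\bar D$ is a per-task premium $w\bar D$ that still depends on $w$.
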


\begin{proof}
Condition~\eqref{eq:separation-condition} makes the separating-price interval non-empty for every $w>0$; the full construction is in Appendix~\ref{app:separation}.
\end{proof}

\paragraph{Economic intuition.} Two complementary forces explain why condition~\eqref{eq:separation-condition} is economically natural:

\emph{Log-supermodularity of preferences.} The production function $v = x^\alpha y^\beta \mu^\gamma$ is log-supermodular in $(\mu, \gamma)$: $\partial^2 \log v / \partial \mu \, \partial \gamma = 1/\mu > 0$. Economically, $\gamma$ is the elasticity of quality with respect to latency slack---a 1\% increase in slack yields a $\gamma$\% increase in quality. High-$\gamma$ buyers (interactive agents, real-time applications) thus value slack disproportionately, while low-$\gamma$ buyers (batch processing, offline analysis) are relatively insensitive to latency improvements.

\emph{Diminishing speedup.} Premium hardware exhibits diminishing speedup: the latency ratio decreases with batch size $B$. For example, at $B=1$, the premium tier may be $1.5\times$ faster; at $B=64$, only $1.2\times$ faster. This asymmetry arises because premium hardware's advantage concentrates in the serial floor $F$ (prefill-dominated), which is amortized away at large batch sizes. Since high-$\gamma$ buyers optimally choose small batches to maximize slack, they operate precisely in the regime where premium hardware's advantage is largest. Low-$\gamma$ buyers batch aggressively and thus experience a smaller premium-economy gap.
\footnote{Tier differentiation can also arise from \emph{scale-up domain architecture}. Appendix~\ref{app:scaleup} discusses this in detail.}

These forces combine: high-$\gamma$ buyers value slack more (log-supermodularity) \emph{and} receive more of it on premium hardware (diminishing speedup), producing the quality-difference separation that condition~\eqref{eq:separation-condition} requires. For our H100/B200 calibration, Section~\ref{sec:numerical} confirms condition~\eqref{eq:separation-condition} holds across 83\% of tested parameter configurations (96\% at economically relevant WTP scales), with the quality-difference gain from upgrading to B200 strictly increasing in $\gamma$ throughout the operational range.

\subsubsection{Cost Function Convexity}
\label{sec:cost-convexity}

Having established separation, we prove that the per-task cost function is strictly convex, enabling standard screening techniques within each tier. We first establish uniqueness of the cost-minimizing allocation.

For clarity, define the auxiliary functions:
\begin{align}
    D(x, y) &\equiv y(e + fx), \label{eq:D-def} \\
    F(x, y) &\equiv ax^2 + bx + c + yd. \label{eq:F-def}
\end{align}
The latency function~\eqref{eq:latency} can then be written as $L = B \cdot D(x, y) + F(x, y)$, which is affine in batch size $B$.

\begin{lemma}[Uniqueness of Cost-Minimizing Allocation]
\label{lem:uniqueness}
For any quality level $q \in (0, \bar{q})$, the cost-minimizing input bundle $(x^*, y^*, B^*)$ solving~\eqref{eq:cost-function} is unique.
\end{lemma}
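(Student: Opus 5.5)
Fix a tier $g$ and a type $\gamma$, and write $L = B\,D(x,y)+F(x,y)$ using \eqref{eq:D-def}--\eqref{eq:F-def}. The objective in \eqref{eq:cost-function} becomes
\[
\frac{cL}{B} = c\Bigl(D(x,y) + \frac{F(x,y)}{B}\Bigr),
\]
and the constraint is $x^\alpha y^\beta (L_{\max}-L)^\gamma = q$. I will argue in two stages. First I fix the token bundle $(x,y)$ and show that the optimal batch size is unique. Then I reduce the problem to a single-variable or low-dimensional problem over $(x,y)$ and show its minimizer is unique.

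\textbf{Stage 1: batch size.} For fixed $(x,y)$, the quality constraint pins down the latency, $L = L_{\max} - \bigl(q/(x^\alpha y^\beta)\bigr)^{1/\gamma} \equiv \bar L(x,y)$. Because $L$ is strictly increasing and affine in $B$ (since $D>0$ when $y>0$), at most one $B$ attains it, namely $B(x,y) = (\bar L - F)/D$. So $B$ is not a free variable once $(x,y)$ is chosen.

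\textbf{Stage 2: reduced problem.} Substituting this $B$, per-task cost becomes $c\,\bar L(x,y)\,D(x,y)/(\bar L(x,y)-F(x,y))$, minimized over the feasible set $\{(x,y) : F < \bar L\}$. I plan to show strict quasiconvexity of this reduced objective, or equivalently a unique stationary point. A cleaner alternative, which I would try first, is to parametrize by slack $\mu = L_{\max}-L$. For fixed $(\mu,B)$, minimizing $L/B$ subject to $x^\alpha y^\beta = q\mu^{-\gamma}$ is minimizing a function that is convex in $(x,y)$ (a quadratic plus bilinear term, made convex under the log change of variables $x=e^{u}$, $y=e^{v}$, where posynomial-type terms are convex) over an affine constraint in $(u,v)$. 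The $Bxy$ and $ax^2$ terms are monomials and hence convex in logs. So the whole problem is a geometric program: its objective $cL/B = c(dy + ey + fxy + (ax^2+bx+c)/B)$ is a posynomial in $(x,y,B)$. The constraint $x^\alpha y^\beta(L_{\max}-L)^\gamma \ge q$ (equality is attained at the optimum by monotonicity) can be written as $L + q^{1/\gamma}x^{-\alpha/\gamma}y^{-\beta/\gamma} \le L_{\max}$, a posynomial inequality. After the log transform the problem is therefore convex.

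\textbf{Main obstacle: strictness.} Convexity gives a convex set of minimizers; uniqueness needs strict convexity, which is the delicate step. A log-sum-exp of monomials is strictly convex along every direction except those where all exponent vectors have equal inner product. The objective contains monomials with exponent vectors $(0,1,0)$, $(1,1,0)$, $(2,0,-1)$, $(1,0,-1)$, $(0,0,-1)$ in $(\log x,\log y,\log B)$, which span $\mathbb{R}^3$. Hence no direction leaves all of them affine-equal, so the log-transformed objective is strictly convex, provided $a,b,f>0$ (from the fitted coefficients in Appendix~\ref{app:empirical}). A strictly convex objective over a convex feasible set with attained minimum yields a unique $(\log x^*,\log y^*,\log B^*)$, hence a unique $(x^*,y^*,B^*)$. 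Existence follows from coercivity: cost blows up as $x,y\to 0$ (the constraint then forces an infeasible slack) or as $B\to 0$, and the set is bounded for $q<\bar q$. If $B$ must be an integer, I would state uniqueness for the continuous relaxation, as the paper implicitly does.
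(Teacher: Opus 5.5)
Your route works and differs genuinely from the paper's, but the strictness step is wrong as written and needs a one-line repair. Two slips combine there.

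First, the objective is mis-transcribed. The term $dy$ belongs to $F$, so $L/B = ey + fxy + (ax^2+bx+c+dy)/B$, and the decode-base monomial is $dy/B$ with exponent vector $(0,1,-1)$.

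Second, the criterion is misapplied. For $\log\sum_i c_i e^{a_i^\top z}$ the Hessian is the weighted covariance of the $a_i$, so strict convexity needs the \emph{differences} $a_i-a_j$ to span $\mathbb{R}^3$; linear span of the $a_i$ is not enough. Your five vectors span linearly, yet each has inner product $1$ with $(0,1,-1)$. Indeed your transcribed objective is multiplied by $e^t$ under $y\mapsto e^t y$, $B\mapsto e^{-t}B$, so its logarithm is affine along that line.

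Either repair works:
\begin{itemize}
\item With the correct term, the monomial $dy/B$ has inner product $2$ with $(0,1,-1)$. With the fitted (all positive) coefficients the exponent vectors then have full affine span.
\item More robustly, minimize the un-logged sum $\sum_i c_i e^{a_i^\top z}$. It has the same minimizers and Hessian $\sum_i c_i e^{a_i^\top z}a_ia_i^\top$, which is positive definite exactly when the $a_i$ span $\mathbb{R}^3$ linearly---the test you actually ran. Under the paper's hypotheses ($a,b,e,f\ge 0$, $c,d>0$, $e+f>0$) this needs only $a+b+f>0$. That condition is necessary anyway for a minimizer to exist, since otherwise cost does not depend on $x$.
\end{itemize}

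Your binding-constraint claim is correct. At a slack point, lowering $y$ with $(x,B)$ fixed strictly lowers $L/B$, whose $y$-derivative is $e+fx+d/B>0$, and by continuity brings quality back to $q$. The sentence about fixing $(\mu,B)$ is muddled, since fixing both already pins $L/B$, but nothing depends on it.

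Compared with the paper: the paper works in $(x,y,\mu)$. It uses strict concavity of $x^\alpha y^\beta\mu^\gamma$ to get a strictly convex upper contour set, then invokes only continuity of $c_gL/B$ and compactness. That does not by itself give uniqueness. A strictly convex feasible set forces a unique minimizer only for objectives that are (quasi)convex in the same coordinates, such as linear input prices, and the paper never shows this for $c_g(L_{\max}-\mu)D/N$. Moreover $\{L<L_{\max}\}$ is open, not compact.

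Your geometric-programming reformulation supplies exactly the missing structure:
\begin{itemize}
\item In $(\log x,\log y,\log B)$, both the posynomial objective and the constraint $L+q^{1/\gamma}x^{-\alpha/\gamma}y^{-\beta/\gamma}\le L_{\max}$ are convex.
\item Strictness reduces to a rank check on exponent vectors.
\item The relaxed constraint set is closed while still forcing $\mu>0$, which also repairs the compactness claim.
\end{itemize}
The cost is reliance on the posynomial form of the fitted latency (nonnegative coefficients, as in Table~\ref{tab:latency-params}) and on a continuous $B\ge 1$, which stays convex as $\log B\ge 0$. With integer $B$, ties can occur, so restricting to the continuous relaxation is the right caveat.
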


\begin{proof}
The production function $v(x, y; L, \gamma)$ with $\alpha, \beta, \gamma < 1$ and $\alpha + \beta + \gamma < 1$ is strictly concave in $(x, y, \mu)$. Therefore, the level set is strictly convex. The cost objective $c_g L / B$ is continuous on the feasible region. Since the feasible set is compact (bounded by the latency constraint $L < L_{\max}$) and the isoquant is strictly convex, the cost-minimizing bundle is unique.
\end{proof}

\begin{theorem}[Cost Function Convexity]
\label{thm:cost-convexity}
Under the latency function $L(B, x, y)$ with $a, b, e, f \geq 0$, $c, d > 0$, and $e + f > 0$, and the production function $v(x, y; L, \gamma)$ with $\alpha + \beta + \gamma < 1$, the per-task cost function:
\begin{equation}
    \tilde{c}^*(q; g) = \min_{x, y > 0, \, B \geq 1} \left\{ \frac{c_g \cdot L(B, x, y)}{B} \;\middle|\; v(x, y; L, \gamma) = q, \; L < L_{\max} \right\}
\end{equation}
is strictly convex on its domain $(0, \bar{q})$, where $\bar{q}$ is the maximum feasible quality.
\end{theorem}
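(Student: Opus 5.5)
We prove strict convexity of $\tilde{c}^*(q;g)$ by rewriting the cost-minimization problem in terms of slack and reading the value function off a concave maximization. Use $\mu = L_{\max} - L$ and the decomposition $L = B\,D(x,y) + F(x,y)$ from \eqref{eq:D-def}--\eqref{eq:F-def}. The per-task cost then reads
\[
\frac{c_g L}{B} = c_g\Big(D(x,y) + \frac{F(x,y)}{B}\Big),
\]
and the batch size satisfies $B = (L_{\max}-\mu - F)/D$. Since $e+f>0$, $D>0$ whenever $x,y>0$. So for fixed $(x,y,\mu)$ the batch size is pinned down, and the problem becomes a minimization over $(x,y,\mu)$ subject to $x^\alpha y^\beta \mu^\gamma = q$, with $B\ge 1$ as a side constraint.

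The core device is a change of variables to logarithmic or power coordinates in which the constraint becomes linear and the cost becomes convex.

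1. Set $z = (\log x, \log y, \log \mu)$. The constraint $v=q$ becomes the affine condition $\alpha\log x + \beta\log y + \gamma\log\mu = \log q$.
2. Show that per-task machine time $h(x,y,\mu) = D + F\,D/(L_{\max}-\mu-F)$ is jointly convex in these coordinates on the feasible region.
3. If step 2 holds, $\tilde{c}^*(e^t)$ is convex in $t=\log q$ by the standard partial-minimization argument: the infimum of a jointly convex function over an affine family is convex in the parameter.
4. Convexity in $\log q$ alone does not give convexity in $q$. To get it, use a scaling argument. Scaling $(x,y,\mu)$ by $\lambda^{1/\rho}$ with $\rho=\alpha+\beta+\gamma<1$ multiplies $q$ by $\lambda$, while every term of $h$ grows at least linearly in the scale factor. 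Hence $\tilde{c}^*(\lambda q)\ge \lambda^{1/\rho'}\tilde{c}^*(q)$ for some exponent $1/\rho'>1$ near the relevant range.
5. Combining this superlinear growth with convexity in $\log q$ gives convexity in $q$. Strictness follows from the uniqueness in Lemma~\ref{lem:uniqueness} together with the strict concavity of $v$: if two quality levels had a linear segment of $\tilde{c}^*$ between them, the convex combination of their optimizers would be feasible for a strictly larger $v$, so cost could be reduced strictly.

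An alternative route may be cleaner. Treat the dual, the maximal quality attainable at per-task budget $k$,
\[
V(k)=\max\{v : c_g L/B\le k\}.
\]
If this feasible set is convex in $(x,y,\mu,B)$ and $v$ is strictly concave with $\rho<1$, then $V$ is strictly concave and increasing, and $\tilde{c}^*=V^{-1}$ is strictly convex. I would attempt this first. It reduces the theorem to showing that the sublevel set $\{(x,y,\mu,B): B D + F \le kB,\ \mu \le L_{\max}-BD-F\}$ is convex, perhaps after the substitution $u=1/B$, which makes $F/B = F u$ a product term.

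The main obstacle is this convexity step, in either form. The bilinear term $f Bx y$ and the product $F\cdot(1/B)$ are not jointly convex in the raw variables. I would handle them by working in log-coordinates, where monomials such as $Bxy$ and $x^2/B$ become exponentials of affine functions, and $h$ becomes a posynomial in $(x,y,B)$. A posynomial is convex in log-coordinates; this is geometric-programming convexity. The slack constraint is then a posynomial inequality $BD+F+\mu\le L_{\max}$ in $(x,y,B,\mu)$, also convex in log-space.

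This gives convexity in $\log q$. Step 4 then lifts it to convexity in $q$. The remaining work is making the superlinear growth rigorous; I would do this via the power-mean inequality on posynomial terms of degree at least one.

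The boundary behavior in the text, $\tilde{c}^{*\prime}\to\infty$ as $q\to\bar q$, follows because the slack constraint binds as $q\to\bar q$. The behavior at $q\to0$ follows from the degree exceeding $\rho$.
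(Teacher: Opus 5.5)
Your route is genuinely different from the paper's and can be made to work. As written, though, step~5 does not follow from what you establish.

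\textbf{The gap in step~5.} Steps~2--3 give only convexity of $\psi(t)=\tilde c^*(e^t)$ in $t=\log q$. Convexity in $q$ is equivalent to $\psi''\ge\psi'$; writing $\psi=e^{\chi}$, this reads $\chi''+\chi'(\chi'-1)\ge 0$. Convexity of $\psi$ gives only $\chi''\ge-\chi'^2$. A $\chi$ with $\chi'\ge 1/\rho$ everywhere but $\chi''$ briefly in $(-\chi'^2,-\chi'(\chi'-1))$ satisfies both your convexity in $\log q$ and your growth bound, yet is not convex in $q$. So the missing ingredient is \emph{log}-convexity of $\tilde c^*$ in $\log q$.

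\textbf{The fix, from your own geometric-programming structure.} Carry the logarithm through. In $z=(\log x,\log y,\log\mu)$:
\begin{itemize}
\item $\log D$ and $\log(FD)$ are log-sum-exp functions, hence convex.
\item $-\log N=-\log(L_{\max}-F-\mu)$ is convex, because $F+\mu$ is convex in $z$.
\item So $\log h$ is convex on the convex set $\{D+F+\mu\le L_{\max}\}$, which is exactly the constraint $B\ge 1$.
\item Partial minimization over the affine slice $\alpha z_1+\beta z_2+\gamma z_3=t$ then makes $\chi=\log\psi$ convex.
\end{itemize}
Your scaling step gives $\chi(t+\delta)-\chi(t)\ge\delta/\rho$, with exact exponent $1/\rho$ rather than some $\rho'$. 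To see it, scale the optimizer for $\lambda q$ down by $\lambda^{-1/\rho}$. Every monomial of $D$ and of $FD$ has degree at least one ($F$ itself does not, because of the constant $c$). Meanwhile $N$ and $N/D$ increase, so feasibility is preserved.

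Then $\tilde c^{*\prime}_+(q)=e^{\chi(t)-t}\chi'_+(t)$. The first factor is positive and strictly increasing, since $\chi'_+-1\ge 1/\rho-1>0$. The second is positive and nondecreasing. Their product is strictly increasing, which is strict convexity.

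\textbf{What to discard.} Your stated strictness argument does not work. A convex combination of optimizers in raw coordinates has no cost control, because $c_gL/B$ is not convex there; this is the very obstacle you identify. In log coordinates, the geometric mean attains exactly the geometric-mean quality, not a larger one. Lemma~\ref{lem:uniqueness} is not needed at all. The dual route fails for the same reason: $\{BD+F\le kB\}$ is not convex in raw variables, and in log variables it yields only log-log concavity of $V$, which brings you back to the issue above.

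\textbf{Comparison with the paper.} The paper eliminates $B=N/D$ and gets $\tilde c^{*\prime}=c_gD\mu F/(\gamma qN^2)$ from the envelope theorem. It then differentiates that expression in $q$ holding $(x,y)$ fixed, and finds a positive second derivative. This local route yields the explicit marginal-cost formula reused later, e.g.\ for the $\gamma$-dependence in Section~\ref{sec:cost-aggregation}. However, the fixed-$(x,y)$ second derivative bounds $\tilde c^{*\prime\prime}$ only from above. A minimum value function lies below $q\mapsto f(x^*(q_0),y^*(q_0),q)$ and touches it at $q_0$, so re-optimizing inputs can only lower curvature. Your repaired argument is global and needs neither differentiability nor uniqueness of the optimizer, so it genuinely closes the convexity claim. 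The same scaling bound also gives $\tilde c^*(q)/q\to 0$, so marginal cost vanishes as $q\to 0$.
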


\begin{proof}
The affine structure $L = B \cdot D + F$ allows batch size elimination, reducing cost to a function of token allocations and quality alone. Define $N \equiv L_{\max} - F(x, y) - \mu > 0$ on the feasible region; the full reduction is carried out in Appendix~\ref{app:cost-derivatives}, where marginal cost is shown to be positive and strictly increasing, establishing strict convexity.

\textbf{Boundary behavior.} As $q \to 0$, the cost-minimizing path takes $x, y \to 0$ so marginal cost vanishes. As $q \to \bar{q}$, feasibility requires $N \to 0$, so marginal cost diverges.
\end{proof}

\subsubsection{Reduction to One-Dimensional Screening}

Given $\gamma$-separation (Theorem~\ref{thm:separation}), buyers of type $\gamma_H$ select tier $g_H$ and buyers of type $\gamma_L$ select tier $g_L$. Within each tier, the seller faces a standard one-dimensional screening problem over willingness-to-pay $w$, with the scale dimension $s$ handled through the cost structure.

For each tier $g \in \{g_H, g_L\}$, the seller's problem is:
\begin{equation}
    \max_{q(w), T(w)} \int_0^{\infty} [T(w) - C(q(w), s; g)] \, dF_w(w)
    \label{eq:seller-problem}
\end{equation}
subject to standard IC and IR constraints. This is precisely the \citet{mussa1978monopoly} framework, with cost function $C(q, s; g) = s \cdot \tilde{c}^*(q/s; g)$.

\subsubsection{Virtual Values and Optimal Quality}

Following Mussa-Rosen, recall the \emph{virtual value} for a buyer of type $w$:
\begin{equation}
    \varphi(w) \equiv w - \frac{1 - F_w(w)}{f_w(w)}.
    \label{eq:virtual-value}
\end{equation}
Under Assumption~\ref{ass:regularity} (monotone hazard rate), $\varphi(w)$ is strictly increasing. Define the \emph{exclusion threshold} $\underline{w}$ as the solution to $\varphi(\underline{w}) = 0$; buyers with $w < \underline{w}$ are excluded from the market.

\begin{proposition}[Optimal Quality Schedule]
\label{prop:optimal-quality}
Within tier $g$, the optimal quality schedule $q^*(w; g)$ for non-excluded buyers ($w \geq \underline{w}$) satisfies:
\begin{equation}
    \varphi(w) = \tilde{c}^{*\prime}(q^*(w; g); g),
    \label{eq:FOC}
\end{equation}
where $\tilde{c}^{*\prime}$ denotes the marginal cost of per-task quality.
\end{proposition}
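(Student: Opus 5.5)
The plan is to run the standard Mussa--Rosen/Myerson argument inside a fixed tier $g$. Separation (Theorem~\ref{thm:separation}) lets us take $\gamma$ as fixed and known. By the cost structure~\eqref{eq:aggregate-cost}, a buyer with $s$ tasks and per-task quality $q$ has gross value $w\,s\,q$ and costs the seller $s\,\tilde{c}^*(q;g)$. So I work per task: with $s$ fixed, the buyer's utility is $U(w)=w\,s\,q(w)-T(w)$, which is linear in $w$ and has single-crossing in $(w,q)$.

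\textbf{Step 1: envelope characterization.} Show that IC holds if and only if two conditions hold: $q(w)$ is nondecreasing, and $U(w)=U(\underline{w}_0)+\int_{\underline{w}_0}^{w} s\,q(t)\,dt$. Also show that IR binds only at the bottom served type. This follows the standard route: monotonicity comes from adding the two IC inequalities, and the integral formula from the envelope theorem (Milgrom--Segal) applied to $\max_{\hat w}\,w s q(\hat w)-T(\hat w)$.

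\textbf{Step 2: substitute and integrate by parts.} Substitute $T=wsq-U$ into~\eqref{eq:seller-problem}, set the lowest type's rent to zero, and integrate by parts. This rewrites expected profit as
\begin{equation*}
s\int \big[\varphi(w)\,q(w)-\tilde{c}^*(q(w);g)\big]\,dF_w(w).
\end{equation*}
The expression does not depend on $s$, which is also the source of the volume independence.

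\textbf{Step 3: pointwise maximization.} Ignore monotonicity and maximize the integrand pointwise in $q\in[0,\bar q)$. By Theorem~\ref{thm:cost-convexity}, $\varphi(w)q-\tilde{c}^*(q;g)$ is strictly concave. Its derivative equals $\varphi(w)$ at $q\to0$, because $\tilde{c}^{*\prime}(0)=0$, and tends to $-\infty$ as $q\to\bar q$. So for $\varphi(w)>0$ there is a unique interior maximizer, characterized by~\eqref{eq:FOC}. For $\varphi(w)\le 0$ the maximizer is $q=0$, i.e.\ exclusion at $w<\underline{w}$. At $w=\underline{w}$ the condition~\eqref{eq:FOC} gives $q^*=0$, consistent with continuity.

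\textbf{Step 4: verify monotonicity.} Check that the relaxed solution satisfies the monotonicity constraint dropped in Step 3. Under Assumption~\ref{ass:regularity}, $\varphi$ is strictly increasing. Since $\tilde{c}^{*\prime}$ is strictly increasing, $q^*(w;g)=(\tilde{c}^{*\prime})^{-1}(\varphi(w))$ is increasing, so no ironing is needed. Transfers are then recovered from the envelope formula.

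The main obstacle is Step 3's reliance on the properties of $\tilde{c}^{*\prime}$. The argument needs $\tilde{c}^*$ to be differentiable and its derivative to be continuous and strictly increasing onto $[0,\infty)$. Theorem~\ref{thm:cost-convexity} states strict convexity and the boundary limits, but differentiability has to be taken from the appendix reduction; I would cite Lemma~\ref{lem:uniqueness} plus an envelope argument on the constrained minimization for it. If that fails, I would replace~\eqref{eq:FOC} by the subgradient condition $\varphi(w)\in\partial\tilde{c}^*(q^*)$.
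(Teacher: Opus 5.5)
Your proposal is correct and follows essentially the same route as the paper: rewrite profit as virtual surplus, maximize it pointwise using the strict convexity and boundary behavior of $\tilde{c}^*$ from Theorem~\ref{thm:cost-convexity}, and use the monotone hazard rate to confirm that $q^*(w;g)$ is increasing, so no ironing is needed. The one difference is that you derive the envelope and integration-by-parts step explicitly, whereas the paper defers it to Step~5 of Theorem~\ref{thm:optimal-menu} and Appendix~\ref{app:proof-transfers}; you are also right that the differentiability of $\tilde{c}^*$ is implicitly taken from the envelope computation in Appendix~\ref{app:cost-derivatives}.
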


\begin{proof}
The seller maximizes virtual surplus. By strict convexity of $\tilde{c}^*$ (Theorem~\ref{thm:cost-convexity}), the first-order condition $\varphi(w) = \tilde{c}^{*\prime}(q)$ yields a unique solution $q^*(w)$ for each $w$ with $\varphi(w) > 0$. Strict convexity ensures this is a maximum. Monotonicity of $\varphi$ (by Assumption~\ref{ass:regularity}) and $\tilde{c}^{*\prime}$ (by convexity) implies $q^*(w)$ is increasing in $w$, satisfying the monotonicity requirement for IC.
\end{proof}

\subsubsection{Optimal Transfers}

The transfer schedule is pinned down by the binding IC constraint at the margin and the binding IR constraint at the exclusion threshold.

\begin{proposition}[Optimal Transfers]
\label{prop:optimal-transfers}
The optimal transfer schedule within tier $g$ is:
\begin{equation}
    T^*(w; g) = w \cdot q^*(w; g) - \int_{\underline{w}}^{w} q^*(k; g) \, dk
    \label{eq:transfers}
\end{equation}
for $w \geq \underline{w}$, and buyers with $w < \underline{w}$ are excluded.
\end{proposition}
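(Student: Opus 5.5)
The plan is the standard Mussa--Rosen/Myerson envelope argument, applied within tier $g$ after the $\gamma$-separation of Theorem~\ref{thm:separation}. Fix the tier and suppress $g$. Write the buyer's payoff from reporting $\hat w$ as $w\,q(\hat w) - T(\hat w)$, with per-task quality $q$ normalized so that utility is linear in $w$; volume $s$ scales both value and cost and cancels by~\eqref{eq:aggregate-cost}. Define the indirect utility $U(w) = \max_{\hat w}\{w q(\hat w) - T(\hat w)\}$.

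The first step is to characterize IC. IC holds if and only if $q$ is nondecreasing and $U$ satisfies the envelope condition $U'(w) = q(w)$ almost everywhere. In integral form this reads
\[
U(w) = U(\underline{w}) + \int_{\underline{w}}^{w} q(k)\,dk .
\]
Necessity follows from the envelope theorem applied to a maximum of affine functions of $w$ (so $U$ is convex and absolutely continuous). Sufficiency follows from the standard comparison
\[
U(w) - \bigl[w q(\hat w) - T(\hat w)\bigr] = \int_{\hat w}^{w} \bigl(q(k) - q(\hat w)\bigr)\,dk \ge 0,
\]
which is nonnegative whenever $q$ is monotone. Monotonicity of $q^*$ is already given by Proposition~\ref{prop:optimal-quality}.

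The second step pins down the constant $U(\underline{w})$ by showing that IR binds at the threshold. Since $U$ is nondecreasing ($U' = q \ge 0$), IR for all $w \ge \underline{w}$ reduces to $U(\underline{w}) \ge 0$. Any positive $U(\underline{w})$ is a uniform transfer to every participating buyer, and lowering it raises revenue without affecting IC. Hence $U(\underline{w}) = 0$ at the optimum. For $w < \underline{w}$ the buyer receives $q = 0$ and $T = 0$. This is consistent with Proposition~\ref{prop:optimal-quality}: with $\varphi(w) < 0 = \tilde{c}^{*\prime}(0)$ (Theorem~\ref{thm:cost-convexity}), the virtual surplus $\varphi(w)q - \tilde c^*(q)$ is maximized at $q = 0$. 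Excluded types thus get $U = 0$ and have no incentive to mimic, since mimicking any $\hat w \ge \underline{w}$ yields
\[
(w - \hat w)\,q(\hat w) + U(\hat w) \le \int_{\underline{w}}^{\hat w} \bigl(q(k) - q(\hat w)\bigr)\,dk \le 0 .
\]

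The third step recovers the transfers from $T(w) = w q(w) - U(w)$, which gives exactly~\eqref{eq:transfers}. To confirm optimality, substitute this $T$ into~\eqref{eq:seller-problem} and integrate by parts (Fubini). Using $\int_{\underline w}^\infty \int_{\underline w}^{w} q(k)\,dk\,dF_w(w) = \int_{\underline w}^\infty q(k)(1 - F_w(k))\,dk$, the objective becomes
\[
\int \bigl[\varphi(w) q(w) - \tilde c^*(q(w))\bigr]\,dF_w(w).
\]
This virtual surplus is maximized pointwise by $q^*$, as in Proposition~\ref{prop:optimal-quality}. Because the pointwise maximizer is monotone, the relaxed solution is feasible, and so the transfers are optimal.

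The main obstacle is the bookkeeping that justifies treating the within-tier problem as genuinely one-dimensional. First, one must check that $s$ enters value and cost symmetrically, so that the per-task formulation loses nothing. Second, one must confirm that global deviations across tiers, to the other $\gamma$ tier, remain deterred once transfers are shifted by the integral term. For the latter I would appeal to Theorem~\ref{thm:separation}: its separating interval is stated for each $w$, and tier-level prices can be adjusted by a $w$-independent constant without disturbing within-tier IC. If that fails, the fallback is to state the result conditional on separation, as the paper's reduction does.
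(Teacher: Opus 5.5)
Your proposal is correct and follows the same route as the paper. The paper integrates the envelope condition $U'(w)=q^*(w)$ from the threshold where IR binds, $U(\underline{w})=0$, and reads off $T^*=wq^*-U$; you additionally supply the sufficiency-of-IC check, the reason IR binds, the no-mimicking argument for excluded types, and the virtual-surplus verification that the paper defers to Step~5 of Theorem~\ref{thm:optimal-menu}. One caution on your closing remark: shifting a tier's transfers by a $w$-independent constant to deter cross-tier deviations would move $U(\underline{w})$ off zero, either violating IR at the threshold or leaving rent, and so would change the formula; the result should therefore be read conditional on separation, which is how the paper states it.
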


\begin{proof}
Standard envelope argument. See Appendix~\ref{app:proof-transfers} for details.
\end{proof}

\subsubsection{Incentive Compatibility for Scale}

Having established IC for $w$ within each tier and IC for $\gamma$ across tiers, we verify that buyers do not benefit from misreporting their scale $s$. Following \citet{bergemann2025economics}, we consider both downward deviations (claiming fewer tasks) and upward deviations (claiming more tasks).

A key simplification arises from the cost structure. Because $C(Q, s; g) = s \cdot \tilde{c}^*(Q/s; g)$, the marginal cost of quality is independent of scale:
\begin{equation}
    C_q(q, s; g) = \tilde{c}^{*\prime}(q; g).
    \label{eq:scale-free-MC}
\end{equation}
Intuitively, latency depends on per-task token allocations and batch size, not on total task volume, so the cost of serving $s$ tasks at quality $q$ is exactly $s$ times the single-task cost. This implies that the optimal per-task quality $q^*(w; g)$ depends only on $w$ and $g$---not on $s$---and per-task rents are also scale-independent:
\begin{equation}
    \tilde{u}^*(w; g) \equiv w \cdot q^*(w; g) - \tilde{t}^*(w; g) = \int_{\underline{w}}^{w} q^*(k; g) \, dk,
    \label{eq:per-task-rent}
\end{equation}
where $\tilde{t}^*(w; g)$ is the per-task transfer. Total utility for type $(w, s)$ is $u^*(w, s) = s \cdot \tilde{u}^*(w; g)$.

\begin{proposition}[Scale Incentive Compatibility]
\label{prop:scale-ic}
Under the cost structure $C(Q, s; g) = s \cdot \tilde{c}^*(Q/s; g)$, no buyer benefits from misreporting scale:
\begin{enumerate}
    \item \textbf{Downward deviations}: Type $(w, s)$ reporting $\tilde{s} < s$ receives allocation for $\tilde{s}$ tasks and obtains utility $\tilde{s} \cdot \tilde{u}^*(w) < s \cdot \tilde{u}^*(w)$.
    
    \item \textbf{Upward deviations}: Type $(w, s)$ reporting $(\tilde{w}, \tilde{s})$ with $\tilde{s} > s$ receives allocation for $\tilde{s}$ tasks but can utilize only $s$ of them, paying for $\tilde{s}$. The optimal double deviation satisfies $\tilde{w}^* = ws/\tilde{s}$, yielding utility at most $s \cdot \tilde{u}^*(w)$.
\end{enumerate}
\end{proposition}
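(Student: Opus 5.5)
The argument rests on the scale-free structure in \eqref{eq:scale-free-MC} and \eqref{eq:per-task-rent}. The menu offered to a reporter of $(\tilde w,\tilde s)$ in tier $g$ is $\tilde s$ tasks, each at per-task quality $q^*(\tilde w;g)$, for total transfer $\tilde s\,\tilde t^*(\tilde w;g)$. The within-tier IC established via Propositions~\ref{prop:optimal-quality} and~\ref{prop:optimal-transfers} gives, for every $w,\tilde w$,
\begin{equation*}
w\,q^*(\tilde w;g)-\tilde t^*(\tilde w;g)\le \tilde u^*(w;g),
\end{equation*}
and IR gives $\tilde u^*(w;g)\ge 0$. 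I take these two facts as given and treat each deviation direction separately. Throughout I assume $\gamma$ is reported truthfully, which is justified by Theorem~\ref{thm:separation}.

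\textbf{Downward deviations.} A buyer who reports $\tilde s<s$ (keeping $w$, or more generally reporting any $\tilde w$) gets $\tilde s$ tasks. Because the allocation is contractible and the tasks are symmetric, the buyer's payoff is $\tilde s\,[w q^*(\tilde w)-\tilde t^*(\tilde w)]$. By within-tier IC this is at most $\tilde s\,\tilde u^*(w)$. Since $\tilde u^*(w)\ge 0$, this is in turn at most $s\,\tilde u^*(w)$, with strict inequality whenever $w>\underline w$ (then $\tilde u^*(w)=\int_{\underline w}^w q^*>0$). The residual $s-\tilde s$ tasks are unserved, since the buyer cannot purchase them elsewhere in the model. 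If the paper intends to allow the buyer to split into multiple reports, I would note that separate contracts are additive: the payoff of a split is a sum of per-task payoffs, each bounded by $\tilde u^*(w)$.

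\textbf{Upward deviations.} Reporting $\tilde s>s$, the buyer receives $\tilde s$ tasks at quality $q^*(\tilde w)$ each but has only $s$ tasks to use. The key modeling step, which I expect to be the main obstacle, is how to value surplus capacity. Following the statement's $\tilde w^*=ws/\tilde s$, the interpretation I will adopt is that the buyer spreads its $s$ tasks' value across the $\tilde s$ slots. Under this interpretation the deviator behaves like a type with per-slot valuation $ws/\tilde s$, so the deviation payoff is
\begin{equation*}
\tilde s\,\bigl[(ws/\tilde s)\,q^*(\tilde w)-\tilde t^*(\tilde w)\bigr].
\end{equation*}
Applying within-tier IC at the true per-slot type $w'=ws/\tilde s$, the best choice of $\tilde w$ is $\tilde w=w'$, which gives payoff $\tilde s\,\tilde u^*(ws/\tilde s)$.

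It remains to show $\tilde s\,\tilde u^*(ws/\tilde s)\le s\,\tilde u^*(w)$. Set $\lambda=s/\tilde s<1$; the claim is then $\tilde u^*(\lambda w)\le \lambda\,\tilde u^*(w)$. This follows because $\tilde u^*$ is convex with $\tilde u^*(w)=0$ for $w\le\underline w$: its derivative is $q^*$, which is nondecreasing, and $\tilde u^*(0)=0$. A convex function vanishing at $0$ satisfies $\tilde u^*(\lambda w)\le\lambda \tilde u^*(w)+(1-\lambda)\tilde u^*(0)$, which is exactly the needed bound.

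If instead the excess slots are simply worthless, the payoff is $s\,w q^*(\tilde w)-\tilde s\,\tilde t^*(\tilde w)$. This is at most $s\,[w q^*(\tilde w)-\tilde t^*(\tilde w)]\le s\,\tilde u^*(w)$ because $\tilde t^*\ge 0$ (the transfer equals $\int_{\underline w}^{\tilde w}$ of $\tilde w q^{*\prime}$-type terms, nonnegative by monotonicity of $q^*$). I would present both interpretations, with the convexity argument as the main one matching the statement.
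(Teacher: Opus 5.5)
Your proposal is correct and follows the paper's proof essentially step for step: downward deviations forfeit nonnegative per-task rent. For upward deviations, the payoff $s\,w\,q^*(\tilde w)-\tilde s\,\tilde t^*(\tilde w)=\tilde s\,[(ws/\tilde s)\,q^*(\tilde w)-\tilde t^*(\tilde w)]$ is maximized by within-tier IC at $\tilde w^*=ws/\tilde s$, and convexity of the zero-extended rent then gives $\tilde u^*(\lambda w)\le\lambda\,\tilde u^*(w)$. Your anchoring at $\tilde u^*(0)=0$, and your note that strictness in the downward case needs $w>\underline w$, are slightly more careful than the paper's wording.

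Your two ``interpretations'' of surplus capacity give the identical payoff. So your $\tilde t^*\ge 0$ argument is not a different model but a shorter, convexity-free proof of the same bound.
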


\begin{proof}
See Appendix~\ref{app:proof-scale-ic}. Downward deviations forfeit tasks with positive rent. Upward deviations require the buyer to pay for tasks they cannot use; the optimal strategy mimics a lower-$w$ type, but convexity of $\tilde{u}^*$ in $w$ ensures this yields at most the truthful payoff.
\end{proof}

\subsubsection{The Optimal Menu}

Combining the preceding results, we characterize the complete optimal mechanism.

\begin{theorem}[Optimal Menu with Binary Time Preference]
\label{thm:optimal-menu}
Under the separation condition (Theorem~\ref{thm:separation}), Assumption~\ref{ass:regularity} (monotone hazard rate on $F_w$), and Assumption~\ref{ass:independence} (independence of $w$, $s$, $\gamma$), the optimal menu is:
\begin{equation}
    \mathcal{M}^* = \{(g_H, q^*(w; g_H), T^*(w; g_H))\}_{w \geq \underline{w}} \cup \{(g_L, q^*(w; g_L), T^*(w; g_L))\}_{w \geq \underline{w}},
\end{equation}
where:
\begin{enumerate}
    \item Quality schedules $q^*(w; g)$ solve $\varphi(w) = \tilde{c}^{*\prime}(q^*(w; g); g)$;
    \item Transfer schedules $T^*(w; g) = w \cdot q^*(w; g) - \int_{\underline{w}}^{w} q^*(k; g) \, dk$;
    \item Buyers with $\gamma = \gamma_H$ select tier $g_H$; buyers with $\gamma = \gamma_L$ select tier $g_L$;
    \item Buyers with $w < \underline{w}$ are excluded regardless of $\gamma$.
\end{enumerate}
\end{theorem}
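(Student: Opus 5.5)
The proof assembles the earlier results in three steps: decompose the seller's problem along $\gamma$, solve each tier's one-dimensional problem pointwise in $w$, and check that the assembled menu is jointly incentive compatible in $(w,s,\gamma)$.

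\emph{Step 1: relaxed problem.} Consider first the relaxed problem in which the seller observes $\gamma$ but not $(w,s)$. By Assumption~\ref{ass:independence}, the conditional distribution of $(w,s)$ does not depend on $\gamma$. The seller's objective therefore splits into $\pi_H$ times the tier-$g_H$ problem plus $\pi_L$ times the tier-$g_L$ problem, each of the form~\eqref{eq:seller-problem}.

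\emph{Step 2: each tier's problem.} Within tier $g$, Proposition~\ref{prop:scale-ic} and the homogeneity $C(Q,s;g)=s\,\tilde c^*(Q/s;g)$ let us write payments as $s$ times a per-task transfer. The problem then factors as $\mathbb{E}[s]$ times a per-task Mussa--Rosen problem in $w$. The envelope condition from Proposition~\ref{prop:optimal-transfers} (binding IR at $\underline w$) rewrites expected revenue as expected virtual surplus $\int[\varphi(w)q-\tilde c^*(q;g)]\,dF_w$. Pointwise maximization, together with strict convexity (Theorem~\ref{thm:cost-convexity}) and the boundary behavior $\tilde c^{*\prime}(0)=0$, $\tilde c^{*\prime}\to\infty$, gives the interior solution of Proposition~\ref{prop:optimal-quality} whenever $\varphi(w)>0$. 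When $\varphi(w)\le 0$ it gives $q=0$, which is exclusion for $w<\underline w$ and yields item~4. Items~1 and~2 follow. Monotonicity of $q^*(\cdot;g)$ (from Assumption~\ref{ass:regularity} and convexity) confirms that the relaxed solution satisfies within-tier IC in $w$.

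\emph{Step 3: $\gamma$-IC of the relaxed solution.} It remains to show that the relaxed optimum is implementable when $\gamma$ is private, which yields item~3. Since the relaxed problem's value is an upper bound, implementability makes it optimal. Here I would invoke Theorem~\ref{thm:separation}: Assumption~\ref{ass:separation} guarantees a non-empty separating price interval at each $w$. I then need to check that the specific transfers $T^*(w;g)$ lie in that interval. Concretely, a $\gamma_H$ buyer of type $w$ compares $\tilde u^*(w;g_H)$ with the best deviation $\max_{\hat w}\,[w q_{g_L}(\gamma_H)(\hat w)-\tilde t^*(\hat w;g_L)]$. I would bound this deviation using the quality-difference condition~\eqref{eq:separation-condition}: the extra quality a $\gamma_H$ buyer gets on $g_H$ exceeds what a $\gamma_L$ buyer gets, and the price differential is calibrated to the $\gamma_L$ buyer's gain. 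The symmetric argument handles $\gamma_L$ buyers. Scale misreports combined with $\gamma$ misreports reduce to the per-task comparison, because every payoff is linear in $s$ (Proposition~\ref{prop:scale-ic}).

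\emph{Main obstacle.} The hardest part is Step 3. Theorem~\ref{thm:separation} only asserts that \emph{some} separating transfers exist, whereas the menu fixes transfers through the within-tier envelope formulas. I would first try to show that the separating interval contains the Mussa--Rosen transfers for every $w\ge\underline w$. If that fails in general, I would state the result conditional on it, interpreting ``under the separation condition'' as meaning that the tier transfers of Theorem~\ref{thm:separation} are consistent with~\eqref{eq:transfers}. One could instead absorb the discrepancy as a tier access fee, but a fee would shift rents and could break IR at $\underline w$, so the conditional statement is the cleaner route. The final step is to verify that exclusion at the common threshold $\underline w$ survives cross-tier deviations. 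This holds because $q^*(w;g)=0$ for all $w<\underline w$, so an excluded buyer receives zero on either tier and has no profitable deviation.
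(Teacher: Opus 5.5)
Your skeleton is the paper's. Its proof runs through cross-tier IC, within-tier IC from monotonicity of $q^*$, IR binding at $\underline{w}$, scale IC via Proposition~\ref{prop:scale-ic}, and optimality by pointwise maximization of virtual surplus. Where you differ is in noticing that Theorem~\ref{thm:separation} does not deliver cross-tier IC for this menu. The paper discharges that step in one line by citing Theorem~\ref{thm:separation} (plus the numerics), which is exactly the ``some separating price gap exists at each $w$'' argument you flag as insufficient. So your Step~3 is a genuine gap, and following the paper would not close it.

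To see what is missing, write $q_g(w)=q^*(w;g)$ and $\Delta=\gamma_H-\gamma_L$, and let $\mu_g(w)$ be the slack of the tier-$g$ contract for $w$. A contract is a physical allocation $(x,y,B,g)$. So a $\gamma_H$ buyer taking the economy contract for $w$ gets quality $q_L(w)\mu_L(w)^{\Delta}$, and a $\gamma_L$ buyer taking the premium one gets $q_H(w)\mu_H(w)^{-\Delta}$. With the envelope transfers, the two same-$w$ cross-tier constraints are equivalent to
\[
w\,q_L(w)\bigl(\mu_L(w)^{\Delta}-1\bigr)\;\le\;\int_{\underline{w}}^{w}\bigl(q_H(k)-q_L(k)\bigr)\,dk\;\le\;w\,q_H(w)\bigl(1-\mu_H(w)^{-\Delta}\bigr).
\]
Assumption~\ref{ass:separation}, read at the contracts for a given $w$, only orders the outer bounds. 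Nothing in the hypotheses places the integral between them, and deviations to $(\gamma',w')$ with $w'\neq w$ add further constraints. Your sketch that ``the price differential is calibrated to the $\gamma_L$ buyer's gain'' also fails. The gap $T^*(w;g_H)-T^*(w;g_L)=w(q_H-q_L)-\int_{\underline{w}}^{w}(q_H-q_L)\,dk$ is pinned down by the envelope formula, not chosen. The honest options are to state the result conditionally, as you propose, or to redesign the menu with $\gamma$-rents, which changes its form.

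Your closing claim about exclusion is not right. An excluded buyer can report the other tier together with some $w'>\underline{w}$, not only the null contract. For $w<\underline{w}$, a $\gamma_H$ buyer reporting $(\gamma_L,w')$ gets
\[
q_L(w')\bigl(w\,\mu_L(w')^{\Delta}-w'\bigr)+\int_{\underline{w}}^{w'}q_L(k)\,dk .
\]
This is strictly positive for every $w\in[\,w'\mu_L(w')^{-\Delta},\underline{w})$. That interval is non-empty whenever economy contracts just above $\underline{w}$ have slack bounded above $1$. Symmetrically, low-$w$ $\gamma_L$ buyers raid premium contracts when $\mu_H<1$ there. So keeping item~4 incentive compatible essentially requires economy slack at most $1$ and premium slack at least $1$ near the threshold. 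That is a unit-dependent condition found nowhere in the hypotheses, and with slack in seconds and $L_{\max}$ between 10 and 100, slacks above one are the plausible case.

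Your Step~1 is also not a true relaxation. A seller who observes $\gamma$ still chooses tiers, even $w$ by $w$, which amounts to using $\min_g\tilde{c}^{*}(q;g,\gamma)$. Fixing $\gamma_H\to g_H$ and $\gamma_L\to g_L$ therefore does not bound the original problem's value from above. You would additionally need each type's assigned tier to be its cheaper tier at the qualities served. Section~\ref{sec:numerical} itself reports regimes where B200 is best for all types. The paper's Step~5 makes the same implicit restriction.
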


\begin{proof}
\textbf{Step 1: Cross-tier IC.} By Theorem~\ref{thm:separation}, separation is achievable when condition~\eqref{eq:separation-condition} holds. Log-supermodularity of preferences provides the economic foundation; numerical verification in Section~\ref{sec:numerical} confirms the condition holds with margin for our calibration.

\textbf{Step 2: Within-tier IC.} By Proposition~\ref{prop:optimal-quality}, the quality schedule $q^*(w; g)$ is increasing in $w$. Standard revealed preference arguments establish that the transfer schedule~\eqref{eq:transfers} is IC within each tier.

\textbf{Step 3: IR.} $u^*(\underline{w}) = 0$ so IR binds at the threshold. For $w > \underline{w}$, $u^*(w) = \int_{\underline{w}}^{w} q^*(k) \, dk > 0$, so IR is slack.

\textbf{Step 4: Scale IC.} By Proposition~\ref{prop:scale-ic}, no buyer benefits from misreporting scale $s$.

\textbf{Step 5: Optimality.} Substituting the transfer~\eqref{eq:transfers} and applying integration by parts, the seller's profit reduces to virtual surplus:
\begin{equation}
    \Pi = \sum_{g} \pi_g \int_{\underline{w}}^{\bar{w}} [\varphi(w) \cdot q^*(w; g) - \tilde{c}^*(q^*(w; g); g)] \, dF_w(w).
\end{equation}
Since types are separated across tiers and screened within tiers, pointwise maximization yields the first-order condition $\varphi(w) = \tilde{c}^{*\prime}(q; g)$ as necessary and sufficient. Exclusion below $\underline{w}$ follows from $\varphi(w) < 0$.
\end{proof}
\section{Continuous Time Preference}
\label{sec:continuous}

We extend the analysis where time preference is drawn from a continuous distribution: $\gamma \sim F_\gamma$ on $[\underline{\gamma}, \bar{\gamma}] \subset (0, 1 - \alpha - \beta)$. Multi-dimensional screening over $(w, \gamma)$ is intractable in full generality. However, when the seller offers a finite menu of hardware tiers $\mathcal{G} = \{g_1, \ldots, g_K\}$---as in practice---the problem admits a clean characterization through \emph{tier-induced partitioning} of the $\gamma$ space.

The key economic insight is that discrete hardware converts a continuous multi-dimensional problem into a tractable nested structure. The $\gamma$ space partitions into contiguous segments, one per tier, and within each segment the seller solves a standard one-dimensional screening problem on willingness-to-pay $w$. We develop this insight in three steps: first establishing the partition structure, then addressing the cost-aggregation challenge that arises from within-tier heterogeneity in $\gamma$, and finally characterizing the optimal partition cutoffs and the resulting complete mechanism.

\subsection{Finite Hardware Tiers and Induced Partitioning}

In practice, cloud providers offer a finite menu of hardware tiers $\mathcal{G} = \{g_1, g_2, \ldots, g_K\}$, ordered by capability: $g_1$ (economy) through $g_K$ (premium). This discrete structure fundamentally simplifies the problem.

\begin{assumption}[Ordered Hardware Tiers]
\label{ass:ordered-tiers}
The hardware tiers satisfy:
\begin{enumerate}
    \item \textbf{Cost ordering}: $c_{g_1} < c_{g_2} < \cdots < c_{g_K}$
    \item \textbf{Latency ordering}: For all $(B, x, y)$, $L_{g_1}(B, x, y) > L_{g_2}(B, x, y) > \cdots > L_{g_K}(B, x, y)$
    \item \textbf{Slack ordering}: $\mu_{g_1}(\gamma) < \mu_{g_2}(\gamma) < \cdots < \mu_{g_K}(\gamma)$ for all $\gamma \in [\underline{\gamma}, \bar{\gamma}]$
\end{enumerate}
\end{assumption}

Under this ordering, higher-indexed tiers are more expensive but achieve greater latency slack. For the partition to respect self-selection---so that higher-$\gamma$ types voluntarily sort into premium tiers---we need the \emph{quality advantage} of upgrading hardware to be increasing in time preference. The following assumption formalizes this single-crossing structure in the tier dimension.

\begin{assumption}[Quality-Difference Separation]
\label{ass:quality-separation}
For each $k \in \{1, \ldots, K-1\}$, the quality gain from tier $g_{k+1}$ over tier $g_k$ is strictly increasing in $\gamma$:
\begin{equation}
    \frac{\partial}{\partial \gamma} \left[ q_{g_{k+1}}(\gamma) - q_{g_k}(\gamma) \right] > 0 \quad \text{for all } \gamma \in [\underline{\gamma}, \bar{\gamma}],
    \label{eq:quality-separation}
\end{equation}
where $q_g(\gamma) = (x_g^*(\gamma))^\alpha (y_g^*(\gamma))^\beta \mu_g(\gamma)^\gamma$ is the quality achieved by type $\gamma$ on tier $g$ at the cost-minimizing allocation.
\end{assumption}

This assumption generalizes the binary separation condition~\eqref{eq:separation-condition} to adjacent tier pairs. The same economic forces that justify the binary condition---log-supermodularity of preferences and diminishing speedup---apply here: high-$\gamma$ types value slack disproportionately (the $\mu^\gamma$ term), and premium hardware's advantage is largest at the small batch sizes these types optimally choose.

The separation condition guarantees that, for any set of tier prices, higher-$\gamma$ types prefer higher-indexed tiers. This generates a natural partition of the $\gamma$ space into contiguous segments, formalized as follows.

\begin{definition}[Tier-Induced Partition]
\label{def:tier-partition}
A partition $\{\Gamma_1, \Gamma_2, \ldots, \Gamma_K\}$ of $[\underline{\gamma}, \bar{\gamma}]$ is \emph{tier-induced} if there exist cutoffs $\underline{\gamma} = \gamma_0 < \gamma_1 < \gamma_2 < \cdots < \gamma_{K-1} < \gamma_K = \bar{\gamma}$ such that:
\begin{equation}
    \Gamma_k = [\gamma_{k-1}, \gamma_k) \quad \text{for } k = 1, \ldots, K-1, \quad \Gamma_K = [\gamma_{K-1}, \bar{\gamma}],
\end{equation}
and all types $\gamma \in \Gamma_k$ optimally select tier $g_k$.
\end{definition}

\begin{proposition}[Existence of Tier-Induced Partition]
\label{prop:partition-existence}
Under Assumptions~\ref{ass:ordered-tiers} and~\ref{ass:quality-separation}, there exists a tier-induced partition of $[\underline{\gamma}, \bar{\gamma}]$ such that:
\begin{enumerate}
    \item Types in $\Gamma_k$ weakly prefer tier $g_k$ to all other tiers;
    \item The partition is monotone: higher-$\gamma$ types select higher-indexed (premium) tiers;
    \item Boundary types $\gamma_k$ are indifferent between tiers $g_k$ and $g_{k+1}$.
\end{enumerate}
\end{proposition}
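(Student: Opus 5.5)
The plan is to exhibit tier transfers explicitly and then reduce the whole statement to a single-crossing argument in $\gamma$. Fix $w>0$ and write the utility of type $\gamma$ choosing tier $g_k$ at transfer $T_k$ (per task) as $U_k(\gamma) = w\,q_{g_k}(\gamma) - T_k$. The object to study is the adjacent difference
\[
\Delta_k(\gamma) \equiv U_{k+1}(\gamma) - U_k(\gamma) = w\bigl[q_{g_{k+1}}(\gamma) - q_{g_k}(\gamma)\bigr] - (T_{k+1} - T_k).
\]
By Assumption~\ref{ass:quality-separation}, $\Delta_k$ is strictly increasing in $\gamma$ for every $k$, whatever the prices. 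This is the single-crossing property in the tier dimension. Continuity of $q_g(\gamma)$ in $\gamma$ is needed for the boundary claims. I would obtain it from Lemma~\ref{lem:uniqueness}, since a unique minimizer depends continuously on parameters by Berge's maximum theorem.

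\textbf{Step 1 (price construction).} Choose target cutoffs $\underline{\gamma}<\gamma_1<\cdots<\gamma_{K-1}<\bar{\gamma}$. Set $T_1$ arbitrary, then define recursively
\[
T_{k+1} = T_k + w\bigl[q_{g_{k+1}}(\gamma_k) - q_{g_k}(\gamma_k)\bigr].
\]
With these prices $\Delta_k(\gamma_k)=0$. Since $\Delta_k$ is strictly increasing, $\Delta_k<0$ for $\gamma<\gamma_k$ and $\Delta_k>0$ for $\gamma>\gamma_k$. This immediately gives claim~3, indifference of the boundary type between $g_k$ and $g_{k+1}$.

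\textbf{Step 2 (from adjacent to global comparisons).} For $\gamma\in\Gamma_k=[\gamma_{k-1},\gamma_k)$, the increasing cutoffs give $\gamma\ge\gamma_{j}$ for all $j\le k-1$ and $\gamma<\gamma_j$ for all $j\ge k$. Hence $\Delta_j(\gamma)\ge 0$ for $j<k$ and $\Delta_j(\gamma)<0$ for $j\ge k$. Telescoping $U_k-U_m=\sum\Delta_j$ over the intervening indices shows $U_k(\gamma)\ge U_m(\gamma)$ for every $m$. So tier $g_k$ is weakly optimal on $\Gamma_k$, which is claim~1. Monotonicity, claim~2, follows because the partition intervals are ordered.

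\textbf{Step 3 (optimal rather than arbitrary cutoffs).} If the proposition is meant for the seller-chosen or equilibrium partition, rather than just some tier-induced partition, I would add one observation. Under single crossing, any price vector induces choice sets that are intervals ordered by $k$: the argmax correspondence is monotone by Topkis, applied to the strictly increasing differences $\Delta_k$. Empty tiers can be dropped, or handled by allowing degenerate intervals. Assumption~\ref{ass:ordered-tiers} is used only to order the tiers consistently, so that ``higher index'' means ``premium.''

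The main obstacle I expect is that $q_g(\gamma)$, and hence the transfers, are defined per task at the cost-minimizing allocation. Within-tier screening in $w$ then makes the relevant comparison the full rent schedule rather than $w\,q_g(\gamma)-T_g$ with a single quality. I would handle this first by working at each fixed $w$, as in Theorem~\ref{thm:separation}, and allowing the cutoffs to depend on $w$. The fallback is to assume the Step~1 construction is applied rent-by-rent, and to verify that increasing differences in $\gamma$ survive the envelope integral $\int q^*(k;g)\,dk$, since that integral preserves monotonicity in $\gamma$ pointwise.
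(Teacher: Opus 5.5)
Your proof is correct, but it runs in the opposite direction from the paper's.

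\textbf{The paper's route.} The paper fixes an arbitrary increasing price vector and sets $\Phi_k(\gamma)=w\,\Delta q_k(\gamma)-(T_{g_{k+1}}-T_{g_k})$. It extracts a cutoff for each adjacent pair by the intermediate value theorem. It then asserts $\gamma_1^*<\cdots<\gamma_{K-1}^*$ by a contradiction argument, and it checks only adjacent comparisons.

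\textbf{Your route.} You fix strictly ordered interior cutoffs first and back out transfers so that $\Delta_k(\gamma_k)=0$ by construction. You then handle non-adjacent tiers by telescoping. For an existence statement this is the tighter argument, for three reasons:
\begin{enumerate}
\item It needs no sign change of $\Phi_k$ on $[\underline{\gamma},\bar{\gamma}]$. The paper's intermediate value step assumes such a sign change without justification.
\item It needs no continuity of $q_g(\gamma)$. Strict monotonicity of $\Delta_k$ plus its zero at $\gamma_k$ already fixes the signs, so you can drop the detour through Berge and Lemma~\ref{lem:uniqueness}.
\item The ordering of the cutoffs comes for free.
\end{enumerate}

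\textbf{The stronger claim.} The paper's route aims at the claim that any prices induce such a partition. Its ordering step fails in that generality. For $K=3$, take prices with $w\,\Delta q_1<T_{g_2}-T_{g_1}$ and $w\,\Delta q_2>T_{g_3}-T_{g_2}$ throughout the support. Then $g_2$ is dominated, and any boundary convention gives $\gamma_1^*=\bar{\gamma}>\gamma_2^*=\underline{\gamma}$.

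Your Step~3 states the correct version. Since $U_m-U_k=\sum_{j=k}^{m-1}\Delta_j$ is strictly increasing in $\gamma$ for $m>k$, any prices induce ordered choice intervals, though some may be empty. So strict cutoffs as in Definition~\ref{def:tier-partition} are guaranteed only for suitably chosen prices, as in your Steps~1--2.

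\textbf{One caution on your closing fallback.} Under Assumption~\ref{ass:tier-uniform}, the schedules $q^*(w;g_k)$ do not vary with $\gamma$. A rent-by-rent comparison would therefore carry no $\gamma$-dependence to exploit. You do not need it, because the statement, like the paper's proof, is at fixed $w$ with tier-level transfers.
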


\begin{proof}
Monotone selection follows directly from Assumption~\ref{ass:quality-separation}: if type $\gamma'$ prefers $g_{k+1}$ over $g_k$, then the strictly increasing quality gap ensures that all types $\gamma'' > \gamma'$ do as well. Given fixed tier prices, the indifference function $\Phi_k(\gamma) \equiv w \cdot \Delta q_k(\gamma) - (T_{g_{k+1}} - T_{g_k})$ is strictly increasing and continuous, so the intermediate value theorem yields a unique cutoff $\gamma_k^*$ at which the buyer is indifferent. Iterating across adjacent tier pairs produces the full partition. See Appendix~\ref{app:proof-partition} for the complete argument.
\end{proof}

\subsection{Within-Tier Cost Aggregation}
\label{sec:cost-aggregation}

The partition reduces the multi-dimensional screening problem to $K$ one-dimensional problems---one per tier. However, a subtlety arises that distinguishes the continuous case from the binary model of Section~\ref{sec:two-types}: within tier $g_k$, buyers with different $\gamma$ values incur different per-task costs for the same quality level. As established in Appendix~\ref{app:cost-derivatives}, the marginal cost $\partial \tilde{c}^* / \partial q = c_g D \mu F / (\gamma q N^2)$ depends explicitly on $\gamma$ through both the denominator and the optimal input mix. In the binary case, each tier serves a single $\gamma$ type, so no such heterogeneity arises.

Since the seller cannot condition the quality-transfer schedule on $\gamma$ within a tier, the relevant object for mechanism design is the \emph{expected} cost, averaged over the $\gamma$ types assigned to each segment:
\begin{equation}
    \tilde{c}_{\mathrm{avg}}(q; g_k, \boldsymbol{\gamma}_k) \equiv \frac{1}{\Pr(\Gamma_k)} \int_{\gamma_{k-1}}^{\gamma_k} \tilde{c}^*(q; g_k, \gamma) \, f_\gamma(\gamma) \, d\gamma,
    \label{eq:avg-cost}
\end{equation}
where $\boldsymbol{\gamma}_k \equiv (\gamma_{k-1}, \gamma_k)$ denotes the cutoffs defining segment $\Gamma_k$, and $\Pr(\Gamma_k) = F_\gamma(\gamma_k) - F_\gamma(\gamma_{k-1})$.

\begin{assumption}[Tier-Uniform Pricing]
\label{ass:tier-uniform}
In the direct mechanism, the seller uses each buyer's reported $\gamma$ to configure cost-minimizing inputs $(x^*, y^*, B^*)$ for the assigned tier, but offers a uniform quality-transfer schedule $(q_k(w), T_k(w))$ within each tier that does not condition on $\gamma$.
\end{assumption}

This assumption is economically natural: the seller can optimize \emph{production} for each buyer's reported type---choosing the best input mix for their time preference---but offers a common \emph{menu} of quality-price pairs within each tier. The consequence is that no buyer can misrepresent $\gamma$ to obtain a different quality-transfer schedule, though cross-subsidization across $\gamma$ types within a tier will arise in equilibrium.

The averaged cost function inherits the regularity properties needed for standard screening theory.

\begin{lemma}[Properties of $\tilde{c}_{\mathrm{avg}}$]
\label{lem:avg-cost-properties}
The averaged cost function satisfies:
\begin{enumerate}
    \item \textbf{Strict convexity}: $\tilde{c}_{\mathrm{avg}}(q; g_k, \boldsymbol{\gamma}_k)$ is strictly convex in $q$.

    \item \textbf{Boundary behavior}: $\tilde{c}_{\mathrm{avg}}'(0; g_k, \boldsymbol{\gamma}_k) = 0$ and $\tilde{c}_{\mathrm{avg}}'(q; g_k, \boldsymbol{\gamma}_k) \to \infty$ as $q$ approaches the feasibility bound.

    \item \textbf{Continuity in cutoffs}: $\tilde{c}_{\mathrm{avg}}(q; g_k, \boldsymbol{\gamma}_k)$ is continuous in $(\gamma_{k-1}, \gamma_k)$ for each $q$ in the interior of the feasible domain.
\end{enumerate}
\end{lemma}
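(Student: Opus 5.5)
The plan is to treat $\tilde{c}_{\mathrm{avg}}$ as a mixture of the functions $q \mapsto \tilde{c}^*(q; g_k, \gamma)$ with weights $f_\gamma(\gamma)/\Pr(\Gamma_k)$ on $[\gamma_{k-1}, \gamma_k]$. Each of the three properties then comes from the corresponding property of $\tilde{c}^*(\cdot; g_k, \gamma)$ in Theorem~\ref{thm:cost-convexity}, provided some uniformity in $\gamma$ lets us pass limits and derivatives through the integral. The first thing to fix is the domain. For each $\gamma$ the feasibility bound is $\bar{q}(g_k, \gamma)$. I would define the common domain as $(0, \bar{q}_k)$ with $\bar{q}_k \equiv \min_{\gamma \in [\gamma_{k-1}, \gamma_k]} \bar{q}(g_k, \gamma)$. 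This minimum exists because $\bar{q}(g_k,\gamma) = \max\{x^\alpha y^\beta \mu^\gamma : L < L_{\max}\}$ is continuous in $\gamma$: it is a supremum of functions jointly continuous in $(x,y,\mu,\gamma)$, and the feasible region does not depend on $\gamma$. ``The feasibility bound'' in item~2 is read as $\bar{q}_k$.

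\textbf{Strict convexity.} For $q_1 \neq q_2$ in $(0, \bar{q}_k)$ and $\lambda \in (0,1)$, Theorem~\ref{thm:cost-convexity} gives the strict inequality $\tilde{c}^*(\lambda q_1 + (1-\lambda) q_2; g_k, \gamma) < \lambda \tilde{c}^*(q_1) + (1-\lambda)\tilde{c}^*(q_2)$ for every $\gamma$ in the segment. Integrating against a density whose mass on $\Gamma_k$ is positive preserves strictness. This step needs no regularity beyond measurability in $\gamma$.

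\textbf{Boundary behavior.} I would use the explicit marginal cost $\partial \tilde{c}^*/\partial q = c_g D\mu F/(\gamma q N^2)$ from Appendix~\ref{app:cost-derivatives}. Combined with uniqueness (Lemma~\ref{lem:uniqueness}) and Berge's maximum theorem, it is jointly continuous in $(q,\gamma)$ on $\{(q,\gamma): q<\bar{q}(g_k,\gamma)\}$. This gives two consequences.
\begin{enumerate}
\item On any compact $[\epsilon, \bar{q}_k - \epsilon] \times [\gamma_{k-1},\gamma_k]$ the derivative is bounded. Dominated convergence then justifies $\tilde{c}_{\mathrm{avg}}' = \mathbb{E}[\tilde{c}^{*\prime} \mid \Gamma_k]$.
\item As $q \to 0$, the pointwise limits $\tilde{c}^{*\prime}(q;\gamma)\to 0$ can be made uniform on the compact $\gamma$-interval by a Dini-type argument, since $\tilde{c}^{*\prime}$ is monotone in $q$ (by convexity) and continuous in $\gamma$. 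Hence $\tilde{c}_{\mathrm{avg}}'(0)=0$.
\end{enumerate}
For $q \to \bar{q}_k$, I would use Fatou's lemma. Since $\tilde{c}^{*\prime}\ge 0$, we have $\liminf \tilde{c}_{\mathrm{avg}}' \ge \mathbb{E}[\liminf \tilde{c}^{*\prime}]$. At the minimizing $\gamma^\dagger$, and at nearby $\gamma$ whose bounds lie close to $\bar{q}_k$, the marginal cost blows up because $N\to 0$. So it suffices that this set of $\gamma$ carries positive $f_\gamma$-mass. This holds if $f_\gamma>0$ near $\gamma^\dagger$; failing that, one can redefine $\bar{q}_k$ as the essential infimum.

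\textbf{Continuity in cutoffs.} Fix $q$ interior. Then $\gamma \mapsto \tilde{c}^*(q;g_k,\gamma)$ is bounded on the segment, so the integral $\int_{\gamma_{k-1}}^{\gamma_k}\tilde{c}^*(q;\gamma) f_\gamma\,d\gamma$ is continuous, in fact Lipschitz, in both endpoints. The normalizer $\Pr(\Gamma_k)=F_\gamma(\gamma_k)-F_\gamma(\gamma_{k-1})$ is continuous and positive whenever $\gamma_{k-1}<\gamma_k$ and the segment has positive mass. The ratio is therefore continuous.

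\textbf{Main obstacle.} I expect the hard part to be the uniformity in $\gamma$ needed for the boundary claims, namely the joint continuity of the minimizer and of the marginal-cost formula, together with handling the $\gamma$-dependent feasibility bound at the top end. I would address this first by establishing joint continuity via Berge's theorem on the reduced problem from Appendix~\ref{app:cost-derivatives}, after batch-size elimination, where the constraint set is independent of $\gamma$.
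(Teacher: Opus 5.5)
Your overall route is the paper's. You view $\tilde{c}_{\mathrm{avg}}$ as an $f_\gamma$-weighted mixture of the functions $\tilde{c}^*(\cdot\,; g_k,\gamma)$ and inherit each property pointwise from Theorem~\ref{thm:cost-convexity}. Your strict-convexity and continuity-in-cutoffs steps coincide with the paper's. The common domain $(0,\bar q_k)$, the dominated-convergence justification of $\tilde{c}_{\mathrm{avg}}' = \mathbb{E}[\tilde{c}^{*\prime}\mid\Gamma_k]$, and the Dini argument at $q\to 0$ add rigor the paper skips.

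The genuine gap is the divergence at the upper bound. In your Fatou bound $\liminf_{q\to\bar q_k}\tilde{c}_{\mathrm{avg}}'(q)\ge\mathbb{E}[\liminf_{q\to\bar q_k}\tilde{c}^{*\prime}(q;g_k,\gamma)\mid\Gamma_k]$, the inner $\liminf$ is infinite only on $S=\{\gamma:\bar q(g_k,\gamma)=\bar q_k\}$. For every other $\gamma$, the point $\bar q_k$ is interior to that type's domain, so the $\liminf$ is the finite number $\tilde{c}^{*\prime}(\bar q_k;g_k,\gamma)$. Your ``nearby $\gamma$'' become large but do not blow up. By the envelope theorem $\partial_\gamma\log\bar q(g_k,\gamma)=\log\mu^*$, with $\mu^*$ the slack at the quality maximizer. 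Hence $S$ is null unless $\mu^*=1$ on a set of positive measure; typically $S=\{\gamma^\dagger\}$, often an endpoint of the segment. Positivity of $f_\gamma$ near $\gamma^\dagger$ does not give $S$ positive mass, and passing to an essential infimum changes nothing when $\bar q(g_k,\cdot)$ is continuous and $f_\gamma>0$. So Fatou yields divergence only if $\gamma\mapsto\tilde{c}^{*\prime}(\bar q_k;g_k,\gamma)$ is non-integrable near $\gamma^\dagger$. Neither your argument nor Theorem~\ref{thm:cost-convexity} establishes this. The paper's one-line version (``at least some types have $\tilde{c}^{*\prime}\to\infty$, so the average diverges'') has exactly the same hole, so you cannot borrow it.

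What is missing is a blow-up rate that is uniform in $\gamma$. Suppose $\tilde{c}^{*\prime}(q;g_k,\gamma)\ge\kappa\,(\bar q(g_k,\gamma)-q)^{-p}$ whenever $0<\bar q(g_k,\gamma)-q<\eta$, that $\bar q(g_k,\cdot)$ is $\ell$-Lipschitz, and that $f_\gamma\ge f_0>0$ near $\gamma^\dagger$. Then $\tilde{c}^{*\prime}(\bar q_k;g_k,\gamma)\ge\kappa\,(\ell|\gamma-\gamma^\dagger|)^{-p}$. This is non-integrable exactly when $p\ge 1$, and in that case your Fatou bound becomes $+\infty$.

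In the regime the paper appeals to ($N\to 0$ at the bound) you get $p=2$. At the optimizer $\mu+N=L_{\max}-F$, so $q\le\bar q(g_k,\gamma)\,(1-N/(\mu+N))^\gamma\le\bar q(g_k,\gamma)\,(1-\gamma N/L_{\max})$, i.e.\ $N\le C\,(\bar q(g_k,\gamma)-q)$. The $N^{-2}$ in $c_g D\mu F/(\gamma q N^2)$ then finishes the job, once $D\mu F/q$ is bounded below near the bound.

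This is not a formality. Under the constraint $B\ge 1$ of Theorem~\ref{thm:cost-convexity} one has $N=BD\ge D$, so the bound is reached at the corner $B=1$, where the envelope formula does not apply. There each type generically has only a square-root singularity ($p=1/2$). If $\bar q(g_k,\cdot)$ moves at a nonzero rate through $\gamma^\dagger$, the averaged marginal cost then stays bounded.

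A smaller point concerns your joint-continuity plan. After eliminating $B$, the requirement $N>0$ reads $x^\alpha y^\beta(L_{\max}-F(x,y))^\gamma>q$. That set depends on $(q,\gamma)$ and is open, so the constraint set is not $\gamma$-independent. Berge's theorem therefore needs a localization step before it delivers joint continuity, e.g.\ a compact sublevel set of the cost that is locally uniform in $(q,\gamma)$.
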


\begin{proof}
Each property follows from the corresponding property of the individual cost functions $\tilde{c}^*(q; g_k, \gamma)$ established in Theorem~\ref{thm:cost-convexity}. Strict convexity is preserved under (weighted) averaging, boundary behavior carries through by dominated convergence, and continuity in cutoffs follows from standard results on parameter-dependent integrals. See Appendix~\ref{app:proof-avg-cost} for the details.
\end{proof}

Property 3 is especially important: it ensures that the seller's optimization problem over cutoffs is well-behaved, which is essential for the fixed-point argument in the next subsection.

\subsection{Optimal Cutoff Characterization}

The seller jointly optimizes the partition cutoffs $\boldsymbol{\gamma} = (\gamma_1, \ldots, \gamma_{K-1})$ and the within-tier mechanisms. These two decisions are coupled: the cutoffs determine which $\gamma$ types enter each segment, which determines the averaged cost $\tilde{c}_{\mathrm{avg}}$, which shapes the optimal quality schedule, which in turn determines the profit from each type assignment. The seller seeks cutoffs at which the \emph{marginal type}---the buyer on the boundary between two segments---generates equal expected profit under either tier assignment. This ``equal-profit-at-the-boundary'' condition is the natural analog of the indifference conditions in standard partitioning problems.

\begin{theorem}[Optimal Cutoffs]
\label{thm:optimal-cutoffs}
Under Assumptions~\ref{ass:ordered-tiers},~\ref{ass:quality-separation}, and~\ref{ass:tier-uniform}:
\begin{enumerate}
    \item[(a)] \textbf{(Necessary condition)} Any interior optimal cutoffs $\boldsymbol{\gamma}^* = (\gamma_1^*, \ldots, \gamma_{K-1}^*)$ satisfy the \emph{marginal-type indifference} conditions: for each $k = 1, \ldots, K-1$,
    \begin{equation}
        \mathbb{E}_w[\Pi_{g_{k+1}}(w, \gamma_k^*)] = \mathbb{E}_w[\Pi_{g_k}(w, \gamma_k^*)],
        \label{eq:indifference}
    \end{equation}
    where $\Pi_g(w, \gamma) \equiv T^*(w; g) - \tilde{c}^*(q^*(w; g); g, \gamma)$ is the profit from serving type $(w, \gamma)$ on tier $g$.

    \item[(b)] \textbf{(Existence)} There exists at least one solution $\boldsymbol{\gamma}^* \in [\underline{\gamma}, \bar{\gamma}]^{K-1}$ to the system~\eqref{eq:indifference}.

    \item[(c)] \textbf{(Sufficient condition for local optimality)} A solution $\boldsymbol{\gamma}^*$ to~\eqref{eq:indifference} is a local maximum if the \emph{profit-difference monotonicity} condition holds:
    \begin{equation}
        \frac{\partial}{\partial \gamma} \mathbb{E}_w[\Pi_{g_{k+1}}(w, \gamma) - \Pi_{g_k}(w, \gamma)] \geq 0 \quad \text{at } \gamma = \gamma_k^* \text{ for all } k.
        \label{eq:SOC}
    \end{equation}
\end{enumerate}
\end{theorem}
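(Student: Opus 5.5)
The plan is to treat the seller's expected profit as a function of the cutoff vector,
\begin{equation*}
    \Pi(\boldsymbol{\gamma}) = \sum_{k=1}^{K} \int_{\gamma_{k-1}}^{\gamma_k} \mathbb{E}_w\bigl[\Pi_{g_k}(w, \gamma)\bigr] f_\gamma(\gamma)\, d\gamma,
\end{equation*}
and derive (a)--(c) from elementary calculus. Throughout, the within-tier schedules $(q^*(w;g_k), T^*(w;g_k))$ are those of Proposition~\ref{prop:optimal-quality} and Proposition~\ref{prop:optimal-transfers}, with $\tilde{c}^{*}$ replaced by $\tilde{c}_{\mathrm{avg}}$. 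By Lemma~\ref{lem:avg-cost-properties}(3), these schedules, and hence $\Pi(\boldsymbol{\gamma})$, depend continuously on the cutoffs.

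\textbf{Part (a): first-order condition.} Differentiate $\Pi$ with respect to $\gamma_k$ using Leibniz's rule. The cutoff $\gamma_k$ is the upper limit of segment $k$ and the lower limit of segment $k+1$, so the boundary terms give
\begin{equation*}
    f_\gamma(\gamma_k)\bigl(\mathbb{E}_w[\Pi_{g_k}(w,\gamma_k)] - \mathbb{E}_w[\Pi_{g_{k+1}}(w,\gamma_k)]\bigr).
\end{equation*}
There is also an indirect term from $\gamma_k$ moving the schedules through $\tilde{c}_{\mathrm{avg}}$. I will remove it with an envelope argument. The schedule in tier $k$ maximizes virtual surplus pointwise given the averaged cost of that segment. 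Differentiating the segment's profit with respect to the schedule, holding the type mass fixed, therefore gives zero, and the indirect effect vanishes to first order. This step is the main obstacle. The envelope logic is clean if each segment's objective is exactly the integral of $\varphi(w)q - \tilde{c}_{\mathrm{avg}}(q)$. I would verify that identity by rewriting $\int_{\Gamma_k} \mathbb{E}_w[T^* - \tilde{c}^*(q^*;g_k,\gamma)]\,dF_\gamma$ as $\Pr(\Gamma_k)\,\mathbb{E}_w[T^* - \tilde{c}_{\mathrm{avg}}(q^*)]$, which holds by the definition~\eqref{eq:avg-cost}. At an interior optimum with $f_\gamma(\gamma_k) > 0$, setting the derivative to zero yields~\eqref{eq:indifference}.

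\textbf{Part (b): existence.} The domain $\{\underline{\gamma} \le \gamma_1 \le \cdots \le \gamma_{K-1} \le \bar{\gamma}\}$ is compact, and $\Pi$ is continuous on it, so a maximizer exists by Weierstrass. If the maximizer is interior, part (a) shows it solves the system. To cover possible boundary maxima, I would instead construct a fixed point. Define $\Psi_k(\boldsymbol{\gamma})$ as the point where the profit difference $\mathbb{E}_w[\Pi_{g_{k+1}} - \Pi_{g_k}](\cdot\,;\boldsymbol{\gamma})$ crosses zero, truncated to the interval $[\gamma_{k-1},\gamma_{k+1}]$. This map is continuous by Lemma~\ref{lem:avg-cost-properties}(3) and Berge's theorem, so Brouwer's theorem gives a fixed point. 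At an interior fixed point, every equation in~\eqref{eq:indifference} holds.

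\textbf{Part (c): local optimality.} Differentiating the first-order expression once more in $\gamma_k$, and again dropping schedule-variation terms by the envelope argument, gives the second derivative
\begin{equation*}
    -f_\gamma(\gamma_k)\,\partial_\gamma \mathbb{E}_w[\Pi_{g_{k+1}} - \Pi_{g_k}].
\end{equation*}
Condition~\eqref{eq:SOC} makes this nonpositive. The cross-partials in $\gamma_k, \gamma_j$ vanish for $j \ne k$ up to schedule effects, so the Hessian is diagonal with nonpositive entries, and $\boldsymbol{\gamma}^*$ is a local maximum. Equivalently, the profit difference crosses zero from below: just above $\gamma_k^*$ the premium tier is more profitable, and just below it the economy tier is. Strictness in the diagonal entries, or a first-order sign-change argument, settles the weak-inequality case.
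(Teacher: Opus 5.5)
Part (a) is the paper's Leibniz-plus-envelope argument, and it goes through. Your identity $\int_{\Gamma_k}\mathbb{E}_w[T^*-\tilde{c}^*(q^*;g_k,\gamma)]\,dF_\gamma=\Pr(\Gamma_k)\,\mathbb{E}_w[T^*-\tilde{c}_{\mathrm{avg}}(q^*;g_k)]$ justifies the envelope step more cleanly than the paper's cancellation argument.

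The step that fails is in (c). The envelope theorem removes schedule re-optimization only at first order. Write segment $k$'s value as $V_k(\boldsymbol{\gamma})=\max_{q(\cdot)}W_k(q;\gamma_{k-1},\gamma_k)$, with $W_k=\int_{\gamma_{k-1}}^{\gamma_k}\int[\varphi(w)q(w)-\tilde{c}^*(q(w);g_k,\gamma)]\,dF_w(w)\,dF_\gamma(\gamma)$. Let $G(\boldsymbol{\gamma})=\sum_k W_k(q_k^*(\boldsymbol{\gamma}^*);\boldsymbol{\gamma})$ be profit with schedules frozen at $\boldsymbol{\gamma}^*$. Since $\Pi\ge G$ with equality at $\boldsymbol{\gamma}^*$, the Hessian of $\Pi$ equals the Hessian of $G$ plus a positive semidefinite correction. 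The Hessian of $G$ is exactly the diagonal matrix with entries $-f_\gamma(\gamma_k^*)\,\partial_\gamma\mathbb{E}_w[\Pi_{g_{k+1}}-\Pi_{g_k}]$ that you computed. So the terms you drop are not negligible, and they work against concavity. For $K=2$, differentiating $\varphi=\tilde{c}_{\mathrm{avg}}'(q_k^*)$ in the cutoff gives
\[
\Pi''(\gamma_1^*)=-f_\gamma(\gamma_1^*)\,\partial_\gamma\mathbb{E}_w[\Pi_{g_2}-\Pi_{g_1}]+\sum_{k=1}^{2}\frac{f_\gamma(\gamma_1^*)^2}{\Pr(\Gamma_k)}\,\mathbb{E}_w\!\left[\frac{\bigl(\tilde{c}_{\mathrm{avg}}'(q_k^*;g_k)-\tilde{c}^{*\prime}(q_k^*;g_k,\gamma_1^*)\bigr)^2}{\tilde{c}_{\mathrm{avg}}''(q_k^*;g_k)}\right],
\]
with the expectation taken over served types. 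The second term is strictly positive unless the boundary type's marginal cost equals the segment average. Consequently, equality in~\eqref{eq:SOC} yields a strict local minimum, and strict inequality suffices only if it dominates this term. For $K\ge3$ the same correction also creates off-diagonal entries: moving $\gamma_{k+1}$ shifts $\tilde{c}_{\mathrm{avg}}$ on segment $k+1$, hence $q_{k+1}^*$, hence the boundary profit at $\gamma_k$. So the Hessian is not diagonal, and a valid sufficient condition must control the full reduced Hessian. The paper's proof of (c) makes the identical omission, since it treats $(q_k^*,T_k^*)$ as independent of $\gamma_k$. The paper therefore does not supply the missing step, but your argument as written does not establish (c).

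In (b), truncating the zero crossing to $[\gamma_{k-1},\gamma_{k+1}]$ does not map the ordered set $\{\gamma_1\le\cdots\le\gamma_{K-1}\}$ into itself when $K\ge3$. Suppose $\gamma_1<\gamma_2$, the zero of $\mathbb{E}_w[\Pi_{g_2}-\Pi_{g_1}]$ lies above $\gamma_2$, and the zero of $\mathbb{E}_w[\Pi_{g_3}-\Pi_{g_2}]$ lies below $\gamma_1$; nothing excludes this when the middle tier is weak. Then $\Psi_1=\gamma_2>\gamma_1=\Psi_2$, and Brouwer does not apply. Truncating sequentially, $\Psi_k\in[\Psi_{k-1}(\boldsymbol{\gamma}),\bar{\gamma}]$, repairs this.

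You also need two further ingredients. First, the zero crossing must be unique, which requires strict monotonicity of the profit difference in $\gamma$; the paper asserts this and you use it implicitly, and Berge alone gives only upper hemicontinuity. Second, you need the continuous extension $\tilde{c}_{\mathrm{avg}}(q;g_k,(\gamma_k,\gamma_k))\equiv\tilde{c}^*(q;g_k,\gamma_k)$ at degenerate segments, which Lemma~\ref{lem:avg-cost-properties}(3) does not cover.

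Finally, like the paper's, your fixed point yields~\eqref{eq:indifference} only at coordinates strictly inside their truncation intervals. Neither argument produces a solution of the system itself, and none exists when some profit difference has constant sign. What is actually proved is existence of a fixed point of the truncated map. Your Weierstrass remark usefully adds existence of a global maximizer.
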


\begin{proof}

Part~(a) demonstrates the necessary marginal-type indifference condition and is derived in Appendix~\ref{app:proof-cutoffs-a}. Part~(b) constructs a best-response mapping $\mathcal{T}$ on the compact convex cutoff space $\mathcal{C} \subset [\underline{\gamma}, \bar{\gamma}]^{K-1}$ and applies Brouwer's fixed point theorem. Part~(c) evaluates the Hessian at the FOC, where the first term vanishes, yielding condition~\eqref{eq:SOC} as the relevant second-order requirement. See Appendix~\ref{app:proof-cutoffs-bc} for the derivations of Parts (b) and (c).

\emph{Economic interpretation of condition~\eqref{eq:SOC}:} The profit-difference monotonicity condition requires that the profit advantage of premium hardware---evaluated at the mechanism's quality schedule---is increasing in time preference. This is economically natural: higher-$\gamma$ types require more slack to achieve the same quality, and premium hardware's latency advantage translates into larger cost savings for these types. We verify this condition numerically in Section~\ref{sec:numerical}.

\emph{Global optimality:} The fixed-point characterization guarantees existence of a critical point. To verify global optimality, one must compare interior solutions satisfying~\eqref{eq:indifference} with boundary solutions (some $\gamma_k^* \in \{\underline{\gamma}, \bar{\gamma}\}$). In our numerical analysis, interior solutions dominate boundary solutions for empirically calibrated parameters.
\end{proof}

\subsection{Within-Segment Screening}

Conditional on the partition, each segment $\Gamma_k$ contains a continuum of $\gamma$ types who all select tier $g_k$. Within each tier, the seller screens on $w$ using the standard Mussa-Rosen framework, but with the averaged cost function $\tilde{c}_{\mathrm{avg}}(\cdot; g_k)$ replacing the single-type cost.

\begin{proposition}[Within-Segment Optimality]
\label{prop:within-segment}
For each segment $\Gamma_k$, the optimal within-tier mechanism is characterized by:
\begin{enumerate}
    \item \textbf{Quality schedule}: $q^*(w; g_k)$ solves $\varphi(w) = \tilde{c}_{\mathrm{avg}}'(q; g_k)$;
    \item \textbf{Transfer schedule}: $T^*(w; g_k) = w \cdot q^*(w; g_k) - \int_{\underline{w}}^{w} q^*(t; g_k) \, dt$;
    \item \textbf{Exclusion}: Types with $w < \underline{w}$ are excluded.
\end{enumerate}
The quality and transfer schedules do not depend on $\gamma$ within each segment, though the seller's realized cost $\tilde{c}^*(q; g_k, \gamma)$ varies across $\gamma$ types.
\end{proposition}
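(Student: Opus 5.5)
The plan is to fix the partition (so segment $\Gamma_k$ and the averaged cost $\tilde{c}_{\mathrm{avg}}(\cdot; g_k, \boldsymbol{\gamma}_k)$ are given) and to reduce the seller's within-segment problem to the single-type Mussa--Rosen problem of Section~\ref{sec:two-types}, with $\tilde{c}_{\mathrm{avg}}$ in place of $\tilde{c}^*$.

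\textbf{Step 1: the objective depends on $\gamma$ only through $\tilde{c}_{\mathrm{avg}}$.} Under Assumption~\ref{ass:tier-uniform}, the menu $(q_k(w), T_k(w))$ does not depend on $\gamma$. Under Assumption~\ref{ass:independence}, $w$ is independent of $\gamma$ conditional on $\gamma \in \Gamma_k$. Expected profit from segment $k$ is
\[
\Pr(\Gamma_k)\int \Big[T_k(w) - \mathbb{E}\big[\tilde{c}^*(q_k(w); g_k, \gamma)\mid \gamma\in\Gamma_k\big]\Big]\, dF_w(w).
\]
By Fubini, the inner expectation equals $\tilde{c}_{\mathrm{avg}}(q_k(w); g_k, \boldsymbol{\gamma}_k)$ from~\eqref{eq:avg-cost}. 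Scale enters only through the factor $s$, as in~\eqref{eq:aggregate-cost}, so we work per task.

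\textbf{Step 2: constraints.} Buyer utility is $w\,q - T$ per task, and it does not depend on $\gamma$ once quality $q$ is promised. Within-tier IC and IR in $w$ are therefore exactly the one-dimensional constraints of~\eqref{eq:seller-problem}. The standard envelope argument of Proposition~\ref{prop:optimal-transfers} gives, for any IC and IR menu with nondecreasing $q_k$, $T_k(w) = w q_k(w) - \int_{\underline{w}}^{w} q_k(t)\,dt - U(\underline{w})$, with $U(\underline{w})=0$ at the optimum. Integrating by parts turns profit into virtual surplus $\int [\varphi(w) q_k(w) - \tilde{c}_{\mathrm{avg}}(q_k(w))]\,dF_w$.

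\textbf{Step 3: pointwise maximization.} By Lemma~\ref{lem:avg-cost-properties}, $\tilde{c}_{\mathrm{avg}}$ is strictly convex, with $\tilde{c}_{\mathrm{avg}}'(0)=0$ and marginal cost diverging at the feasibility bound. So for $\varphi(w)>0$ there is a unique interior maximizer solving $\varphi(w) = \tilde{c}_{\mathrm{avg}}'(q)$, and for $\varphi(w)\le 0$ the maximizer is $q=0$, which is exclusion below $\underline{w}$. Assumption~\ref{ass:regularity} makes $\varphi$ increasing, and $\tilde{c}_{\mathrm{avg}}'$ is strictly increasing, so $q^*$ is nondecreasing. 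The relaxed solution is therefore IC, exactly as in Proposition~\ref{prop:optimal-quality}. The $\gamma$-independence statement follows by construction; only the realized cost $\tilde{c}^*(q;g_k,\gamma)$ varies.

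\textbf{Main obstacle: cross-tier consistency.} Replacing $w$-screening by tier-local screening requires that buyers with $\gamma\in\Gamma_k$ actually still choose tier $g_k$ when facing the full menu, and that no $w$-type jumps tiers. I would handle this by taking the partition as given, meaning the cutoffs are chosen via Proposition~\ref{prop:partition-existence} and Theorem~\ref{thm:optimal-cutoffs} so that the tier-choice constraints hold. The claim would then be stated as optimality conditional on the partition. A fully rigorous argument would also have to check that the $\Phi_k$-indifference at the boundaries is compatible with the resulting transfers for every $w$. I expect this to hold only under the paper's separation assumption, and I would flag it rather than prove it.
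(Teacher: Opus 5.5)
Your proposal is correct and follows the paper's route. You fix the partition, use tier-uniform pricing together with independence of $w$ and $\gamma$ to replace the $\gamma$-dependent cost by $\tilde{c}_{\mathrm{avg}}$, and then apply the standard Mussa--Rosen solution (Propositions~\ref{prop:optimal-quality} and~\ref{prop:optimal-transfers}) using the convexity and boundary properties of Lemma~\ref{lem:avg-cost-properties}. You spell out the Fubini, envelope and virtual-surplus steps that the paper only cites; the cross-tier consistency issue you flag is not handled in this proposition either, and the paper addresses it (informally) in the proof of Theorem~\ref{thm:optimal-menu-continuous}.
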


\begin{proof}
Within tier $g_k$, the seller chooses schedules $(q_k(w), T_k(w))$ to maximize expected profit over all types in the segment:
\begin{equation}
    \max_{q_k(\cdot), T_k(\cdot)} \int_{\underline{w}}^{\bar{w}} \left[ T_k(w) - \tilde{c}_{\mathrm{avg}}(q_k(w); g_k) \right] f_w(w) \, dw
\end{equation}
subject to the standard IC and IR constraints on $w$:
\begin{equation}
    w \cdot q_k(w) - T_k(w) \geq w \cdot q_k(w') - T_k(w') \quad \forall w, w'.
\end{equation}

The averaged cost $\tilde{c}_{\mathrm{avg}}$ arises because the menu does not condition on $\gamma$: for a buyer receiving quality $q$ on tier $g_k$, the seller's cost is $\tilde{c}^*(q; g_k, \gamma)$, which varies with the buyer's $\gamma$. Averaging over the $\gamma$ distribution within $\Gamma_k$ yields $\tilde{c}_{\mathrm{avg}}$ as the expected per-task cost.

Since $\tilde{c}_{\mathrm{avg}}$ is strictly convex with the requisite boundary behavior (Lemma~\ref{lem:avg-cost-properties}), this is a standard one-dimensional screening problem with solution given by Propositions~\ref{prop:optimal-quality} and~\ref{prop:optimal-transfers}, replacing $\tilde{c}^*$ with $\tilde{c}_{\mathrm{avg}}$.

\textbf{Cross-subsidization.} Because $\tilde{c}^*(q; g_k, \gamma)$ varies with $\gamma$, the uniform pricing schedule induces cross-subsidization across $\gamma$ types within each tier: types with $\tilde{c}^*(q; g_k, \gamma) > \tilde{c}_{\mathrm{avg}}(q; g_k)$ are subsidized by types with $\tilde{c}^*(q; g_k, \gamma) < \tilde{c}_{\mathrm{avg}}(q; g_k)$.
\end{proof}

\subsection{The Complete Mechanism}

Combining the partition, cost aggregation, and within-tier screening results, we now state the full characterization.

\begin{theorem}[Optimal Menu with Continuous Time Preference]
\label{thm:optimal-menu-continuous}
Under Assumptions~\ref{ass:ordered-tiers},~\ref{ass:quality-separation},~\ref{ass:tier-uniform}, and~\ref{ass:regularity} (monotone hazard rate on $F_w$), the optimal mechanism with $K$ hardware tiers is:
\begin{equation}
    \mathcal{M}^* = \bigcup_{k=1}^{K} \left\{ (g_k, q^*(w; g_k), T^*(w; g_k)) : w \geq \underline{w}, \, \gamma \in \Gamma_k^* \right\},
\end{equation}
where:
\begin{enumerate}
    \item The partition $\{\Gamma_1^*, \ldots, \Gamma_K^*\}$ is defined by optimal cutoffs $\{\gamma_1^*, \ldots, \gamma_{K-1}^*\}$ from Theorem~\ref{thm:optimal-cutoffs};
    \item Within each tier, quality schedules satisfy $\varphi(w) = \tilde{c}_{\mathrm{avg}}'(q^*(w; g_k); g_k)$;
    \item Transfers are $T^*(w; g_k) = w \cdot q^*(w; g_k) - \int_{\underline{w}}^{w} q^*(t; g_k) \, dt$;
    \item Buyers with $w < \underline{w}$ are excluded regardless of $\gamma$.
\end{enumerate}
\end{theorem}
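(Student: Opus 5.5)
The proof assembles the earlier components in the same order as the proof of Theorem~\ref{thm:optimal-menu}, with the binary separation step replaced by the partition results of this section. The seller's problem is split into an outer choice of cutoffs $\boldsymbol{\gamma}$ and an inner choice of within-tier schedules $(q_k(\cdot),T_k(\cdot))$.

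\textbf{Inner problem.} Fix any tier-induced partition. By Proposition~\ref{prop:partition-existence}, it exists and is monotone under Assumptions~\ref{ass:ordered-tiers} and~\ref{ass:quality-separation}. By Assumption~\ref{ass:tier-uniform}, a buyer's report of $\gamma$ affects only the production inputs, not the quality–transfer schedule. Hence, within a tier, misreporting $\gamma$ changes nothing for the buyer, and $\gamma$-IC within a segment holds trivially. Across segments, $\gamma$-IC is exactly the sorting guaranteed by Proposition~\ref{prop:partition-existence}: boundary types are indifferent and interior types weakly prefer their own tier.

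Given the partition, expected profit is additively separable across tiers. By Assumption~\ref{ass:independence}, the $w$-distribution within each segment is still $F_w$. Each tier's problem is therefore the one-dimensional program of Proposition~\ref{prop:within-segment} with cost $\tilde{c}_{\mathrm{avg}}$. Lemma~\ref{lem:avg-cost-properties} gives the strict convexity and boundary behavior that Propositions~\ref{prop:optimal-quality} and~\ref{prop:optimal-transfers} need. Items~2–4 then follow:
\begin{itemize}
\item the pointwise FOC $\varphi(w)=\tilde{c}_{\mathrm{avg}}'(q)$;
\item the envelope transfers;
\item exclusion wherever $\varphi(w)<0$. Since $\varphi$ does not depend on the tier, the exclusion threshold $\underline{w}$ is common to all tiers.
\end{itemize}
Within-tier IC in $w$ follows from monotonicity of $q^*$, using Assumption~\ref{ass:regularity} and convexity. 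IR binds at $\underline{w}$. Scale IC follows from Proposition~\ref{prop:scale-ic}, because $\tilde{c}_{\mathrm{avg}}$ still enters through the per-task form $C(Q,s;g)=s\,\tilde{c}_{\mathrm{avg}}(Q/s;g)$.

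\textbf{Outer problem.} Substituting the inner solution, profit becomes a value function $V(\boldsymbol{\gamma})=\sum_k \Pr(\Gamma_k)\,\mathbb{E}_w[\,\cdot\,]$ of the cutoffs alone. This function is continuous by part~3 of Lemma~\ref{lem:avg-cost-properties} together with Berge's maximum theorem applied to the inner program. Since the cutoff space is compact, a maximizer exists.

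If the maximizer is interior, differentiating $V$ in $\gamma_k$ requires care. By the envelope theorem, the effect of $\gamma_k$ on the schedules drops out. What remains is the boundary term: the expected profit from reassigning the marginal type. Setting it to zero yields~\eqref{eq:indifference}, so the maximizer coincides with the cutoffs characterized in Theorem~\ref{thm:optimal-cutoffs}. If the maximizer lies on the boundary, some segments are empty and the menu degenerates to fewer tiers; this is still of the stated form. This settles item~1.

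\textbf{Expected obstacle.} The main difficulty is the envelope step in the outer problem. Because $\tilde{c}_{\mathrm{avg}}$ depends on the cutoffs, changing $\gamma_k$ moves the schedules of \emph{both} adjacent tiers, not only the assignment of the marginal type. The argument would first show that these induced schedule changes contribute zero to first order, since each schedule is already optimal for its tier. One must also make sure the marginal type's profit is evaluated at the \emph{equilibrium} transfers $T^*(w;g)$ and not at production costs alone. If differentiability of the inner value in the cutoffs fails, the fallback is to give up the FOC route and rely only on the existence-of-maximizer argument. That route still delivers the structural form of $\mathcal{M}^*$, with item~1 interpreted as "optimal cutoffs" rather than as a solution of~\eqref{eq:indifference}.
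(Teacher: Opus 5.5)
Your proposal is sound at the paper's level of rigor and shares its architecture: inner Mussa--Rosen problems with $\tilde{c}_{\mathrm{avg}}$, IC by citing Propositions~\ref{prop:partition-existence} and~\ref{prop:within-segment}, and an outer choice of cutoffs. Where you differ is in how you secure item~1.

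The paper takes its cutoffs from Theorem~\ref{thm:optimal-cutoffs}(b). There, Brouwer's theorem applied to a best-response map yields a solution of~\eqref{eq:indifference} (or a corner), and joint optimality is then asserted via the envelope theorem. You instead maximize $V(\boldsymbol{\gamma})$ directly, get existence from continuity and compactness, and derive~\eqref{eq:indifference} as a necessary condition at an interior maximizer.

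Your route delivers what the theorem actually claims, namely existence of \emph{optimal} cutoffs. A Brouwer fixed point is only a critical point; the paper concedes in the proof of Theorem~\ref{thm:optimal-cutoffs} that global optimality needs a separate comparison with boundary solutions. Your route also avoids needing strict monotonicity of the profit difference $\Delta_k$ in $\gamma$. The Brouwer step uses that monotonicity to get a unique, continuous root, but Assumption~\ref{ass:quality-separation} is a statement about quality gaps and does not directly provide it. In return, the paper's route gives a constructive fixed-point characterization that matches how $\gamma^*$ is computed in Section~\ref{sec:numerical}.

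Your envelope step, in which only the Leibniz boundary term survives, is cleaner than the paper's separate ``cost-averaging effect'' in Appendix~\ref{app:proof-cutoffs-a}. The worry you raise about evaluating the marginal type at equilibrium transfers is settled by $\mathbb{E}_w[T^*]=\mathbb{E}_w[\varphi\,q^*]$. Your scale-IC check also covers a step the paper's proof skips.

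Two points need tightening. First, part~3 of Lemma~\ref{lem:avg-cost-properties} gives continuity only on nondegenerate partitions, and these do not form a compact set. For the Weierstrass step, use the continuous extension at collapsed segments from Appendix~\ref{app:proof-cutoffs-bc}. Alternatively, note that $0\le V_k\le \Pr(\Gamma_k)\,M$, where $M$ bounds per-type virtual surplus.

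Second, there is a weakness that the paper's own proof shares. You observe that $\gamma$-reports within a tier are payoff-irrelevant, and this cuts against citing Proposition~\ref{prop:partition-existence} for cross-tier IC. If the payoff $w\,q^*(w;g_k)-T^*(w;g_k)$ from every tier is $\gamma$-independent, the menu itself cannot sort buyers on $\gamma$. That proposition concerns flat tier prices with $\gamma$-dependent $q_g(\gamma)$. Nothing in your argument, or the paper's, shows that the profit-indifference cutoffs $\gamma_k^*$ are also buyer-indifference cutoffs for every $w$.
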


\begin{proof}
\textbf{Existence.} The mechanism exists by construction: Proposition~\ref{prop:partition-existence} guarantees existence of a tier-induced partition for any valid cutoffs, Theorem~\ref{thm:optimal-cutoffs}(b) establishes existence of optimal cutoffs via Brouwer's fixed point theorem, and Proposition~\ref{prop:within-segment} provides optimal within-tier mechanisms for any given partition.

\textbf{Incentive compatibility.} We verify IC in two parts:

\emph{Cross-tier IC:} By construction of the partition, boundary types are indifferent between adjacent tiers, and Assumption~\ref{ass:quality-separation} ensures strict preference away from boundaries (Proposition~\ref{prop:partition-existence}). Thus, no type $\gamma \in \Gamma_k$ strictly prefers tier $g_{k'} \neq g_k$. Non-adjacent tier deviations follow by transitivity of the separation condition.

\emph{Within-tier IC:} By Proposition~\ref{prop:within-segment}, the quality schedule is increasing in $w$ and transfers satisfy the envelope condition. Standard arguments establish IC.

\textbf{Optimality.} The mechanism maximizes:
\begin{equation}
    \Pi = \sum_{k=1}^{K} \int_{\gamma_{k-1}^*}^{\gamma_k^*} \int_{\underline{w}}^{\bar{w}} \left[ T^*(w; g_k) - \tilde{c}^*(q^*(w; g_k); g_k, \gamma) \right] dF_w(w) \, dF_\gamma(\gamma).
\end{equation}

The optimization has a \emph{nested} structure: for any fixed cutoffs, the within-tier mechanisms are determined by solving $K$ independent Mussa-Rosen problems with cost $\tilde{c}_{\mathrm{avg}}(\cdot; g_k)$ (Proposition~\ref{prop:within-segment}); the cutoffs are then chosen to satisfy the fixed-point conditions of Theorem~\ref{thm:optimal-cutoffs}. Joint optimality follows from the envelope theorem: at the within-tier optimum, the cutoff FOC correctly captures the marginal effect of shifting types between tiers, and at the optimal cutoffs, no marginal reallocation improves total profit.
\end{proof}

\begin{remark}[Continuous Gamma with Few Tiers]
\label{rem:few-tiers}
In practice, cloud providers offer only 2--4 hardware tiers (e.g., economy, standard, premium). Our numerical results in Section~\ref{sec:numerical} confirm that even $K = 2$ tiers yield profit gains of 26--66\% over the best single-tier alternative in favorable parameter regimes.
\end{remark}
\section{Numerical Analysis}
\label{sec:numerical}

We evaluate the mechanism of Sections~\ref{sec:two-types}--\ref{sec:continuous} through two exercises: (i)~a parameter sweep verifying the separation condition (Assumption~\ref{ass:quality-separation}) across 105 configurations, and (ii)~a profit comparison quantifying gains from two-tier pricing (Theorem~\ref{thm:optimal-menu-continuous}).

\subsection{Methodology}

All computations use the latency model calibrated to Llama3-70B---representative of attention-based, autoregressive LLMs---on two GPU tiers: H100 (economy) and B200 (premium), at 8-GPU cluster rates (\$18.32/hr for $8\times$ H100, \$30.32/hr for $8\times$ B200; price ratio $1.66\times$), matching the multi-GPU configuration from which our measurements derive.\footnote{Costs from Lambda's cloud pricing as of 2/9/2026 \cite{lambda_2026}.} Quality follows $q = x^\alpha y^\beta \mu^\gamma$ where $\mu = L_{\max} - L$ is latency slack, and $\gamma \sim \text{Uniform}[\underline{\gamma}, \bar{\gamma}]$ with $\underline{\gamma} = 0.10$ and $\bar{\gamma} = \min(1 - \alpha - \beta - 0.01,\; 0.45)$.

We sweep over 7 production-function configurations ($(\alpha,\beta) \in \{0.15, 0.30, 0.45\} \times \{0.20, 0.40\} \cup \{(0.15, 0.60)\}$), 5 latency budgets ($L_{\max} \in \{10, 30, 50, 80, 100\}$ seconds), and 3 WTP ranges ($[\underline{w}, \bar{w}] \in \{[5,10],\, [50,100],\, [500,1000]\}$ u\$). For separation, we maximize surplus over $(x, y, B)$ at each $\gamma$ via Nelder-Mead with multiple restarts and verify strict monotonicity of $\Delta q(\gamma)$. For profits, we build averaged cost functions $\tilde{c}_{\text{avg}}(q; g_k)$ via numerical integration over $\gamma$, solve the within-tier Mussa-Rosen condition $\varphi(w) = \tilde{c}'_{\text{avg}}(q)$ with virtual values $\varphi(w) = 2w - \bar{w}$ (Proposition~\ref{prop:within-segment}), and determine $\gamma^*$ by marginal-type indifference (Theorem~\ref{thm:optimal-menu-continuous}).

\subsection{Quality-Difference Separation}
\label{sec:separation-sweep}

We test whether the quality gap $\Delta q(\gamma) \equiv q^*(w; g_{B200}, \gamma) - q^*(w; g_{H100}, \gamma)$ is strictly increasing in $\gamma$---the condition required for incentive-compatible tier assignment (Assumption~\ref{ass:quality-separation}). Table~\ref{tab:separation-verification} reports the fraction of WTP ranges satisfying this condition for each $(\alpha, \beta, L_{\max})$ configuration.

\begin{table}[t]
\centering
\caption{Verification of separation condition (Assumption~\ref{ass:quality-separation}) at cluster-level rates. Each cell reports the fraction of WTP ranges ($[\underline{w}, \bar{w}] \in \{[5,10],\, [50,100],\, [500,1000]\}$) for which the condition holds. \checkmark\ denotes 3/3.}
\label{tab:separation-verification}
\begin{tabular}{cccccccc}
\toprule
$\alpha$ & $\beta$ & $L_{\max}=10$ & $L_{\max}=30$ & $L_{\max}=50$ & $L_{\max}=80$ & $L_{\max}=100$ & Total \\
\midrule
0.15 & 0.20 & 2/3 & 2/3 & 1/3 & 1/3 & 1/3 & 7/15 \\
0.15 & 0.40 & \checkmark & 2/3 & 2/3 & 2/3 & 2/3 & 11/15 \\
0.15 & 0.60 & \checkmark & \checkmark & \checkmark & \checkmark & \checkmark & 15/15 \\
0.30 & 0.20 & \checkmark & \checkmark & 2/3 & 2/3 & 2/3 & 12/15 \\
0.30 & 0.40 & \checkmark & \checkmark & 2/3 & 2/3 & 2/3 & 12/15 \\
0.45 & 0.20 & \checkmark & \checkmark & \checkmark & \checkmark & \checkmark & 15/15 \\
0.45 & 0.40 & \checkmark & \checkmark & \checkmark & \checkmark & \checkmark & 15/15 \\
\midrule
\multicolumn{2}{c}{Overall} & 20/21 & 19/21 & 16/21 & 16/21 & 16/21 & 87/105 \\
\bottomrule
\end{tabular}
\end{table}

Overall, 87 of 105 configurations (82.9\%) satisfy the separation condition. The 18 failures are concentrated in two regimes: (i)~the $w \sim U[5,10]$ range accounts for 15 failures across four $(\alpha, \beta)$ pairs, and (ii)~$(\alpha, \beta) = (0.15, 0.20)$ at $w \sim U[50,100]$ accounts for the remaining 3. The $w \sim U[5,10]$ failures reflect extreme cost constraint: at cluster-level rates (\$18.32/hr), buyers with WTP of \$5--10 per unit retain insufficient surplus for meaningful quality differentiation between tiers. Excluding this cost-constrained range, the separation condition holds in 67 of 70 configurations (95.7\%) at $w \geq U[50,100]$---the economically relevant regime where buyers can sustain positive margins on either tier.

The condition's sensitivity to the cost-to-WTP ratio, rather than production-function weights alone, is confirmed by a comparison at single-GPU rates. At \$2.29/hr (H100) and \$3.79/hr (B200)---preserving the $1.66\times$ cost ratio but reducing absolute costs by $8\times$---separation extends to 102 of 105 configurations (97.1\%), with the five remaining failures confined to $(\alpha, \beta) = (0.15, 0.20)$ at high $L_{\max}$. The lower hardware costs give buyers more surplus headroom, allowing quality differences between tiers to emerge even at low WTP.

The B200 thus acts as a \emph{premium} technology in the sense of Section~\ref{sec:model}: its advantage concentrates in the decode phase ($d_{B200}/d_{H100} = 0.68$), which governs per-token latency. High-$\gamma$ buyers operate at lower batch sizes to preserve responsiveness---precisely where the B200's decode advantage is most pronounced---yielding quality-gap monotonicity ($\partial \Delta q / \partial \gamma > 0$): the single-crossing condition that enables incentive-compatible self-selection into tiers.

\subsection{Mechanism Profit Comparison}
\label{sec:profit-comparison}

We quantify the profit gains from two-tier mechanism design relative to the best single-tier alternative. Table~\ref{tab:profit-comparison} reports expected mechanism profits at $(\alpha, \beta) = (0.30, 0.20)$ with $w \sim U[50, 100]$ u\$, a configuration where hardware costs are a meaningful fraction of per-request value. Separation holds at all five latency budgets for this parameter pair (Table~\ref{tab:separation-verification}).

\begin{table}[t]
\centering
\caption{Two-tier vs.\ single-tier mechanism profit ($\alpha = 0.30$, $\beta = 0.20$, $w \sim U[50, 100]$ u\$, cluster rates).}
\label{tab:profit-comparison}
\begin{tabular}{rcccccc}
\toprule
$L_{\max}$ (s) & $\gamma^*$ & H100-only & B200-only & Two-tier & Best single & Gain \\
\midrule
10  & 0.147 & 4{,}978.3 & 3{,}876.6 & 3{,}985.8 & H100 & $-19.9\%$ \\
30  & 0.246 & 4{,}908.5 & 5{,}539.5 & 6{,}985.6 & B200 & $+26.1\%$ \\
50  & 0.249 & 5{,}826.6 & 6{,}521.3 & 8{,}921.9 & B200 & $+36.8\%$ \\
80  & 0.279 & 7{,}388.8 & 7{,}502.6 & 11{,}426.6 & B200 & $+52.3\%$ \\
100 & 0.262 & 7{,}557.3 & 7{,}729.1 & 12{,}816.5 & B200 & $+65.8\%$ \\
\bottomrule
\end{tabular}
\end{table}

At $L_{\max} = 10$ seconds, the tight latency budget compresses the feasible quality region, and the H100's lower hourly rate outweighs the B200's decode advantage; the mechanism correctly identifies single-tier pricing as optimal. As the latency budget relaxes beyond 30 seconds, the B200's latency advantage compounds nonlinearly through the slack term $\mu^\gamma$, and two-tier gains grow monotonically from $+26.1\%$ to $+65.8\%$. The optimal cutoff $\gamma^* \approx 0.25$ assigns roughly half of time-preference types to each tier, reflecting the near-even market split visible in the single-tier profit columns where the best single tier transitions from H100 to B200 between $L_{\max} = 10$ and $L_{\max} = 30$.

The result extends beyond a single parameter point. At $(\alpha, \beta) = (0.15, 0.40)$ with the same WTP range, the mechanism also yields positive gains up to $+30.6\%$ at $L_{\max} = 100$, confirming that the qualitative finding is robust across production-function specifications where separation holds. At higher quality weights ($\alpha \geq 0.45$, or $\beta \geq 0.40$ with $\alpha \geq 0.30$), the B200 dominates all buyer types and single-tier pricing is optimal---the mechanism correctly identifies that tiering provides no additional value when the premium tier's superiority is so comprehensive that the economy tier serves no profitable segment.

\paragraph{Sensitivity to buyer valuations.}
Table~\ref{tab:profit-comparison} reports results at $w \sim U[50,100]$ u\$, where hardware costs are a meaningful fraction of per-request value. We additionally sweep over $w \sim U[500,1000]$ and $w \sim U[5,10]$. At higher WTP ($w \sim U[500,1000]$), gains turn negative at most latency budgets: the premium tier dominates all buyer types, and within-tier $w$-screening alone captures most available surplus, leaving little incremental value from cross-tier allocation. Marginal positive gains ($+0.8$--$1.5\%$) emerge only at $L_{\max} \geq 80$, where sufficient slack accumulates for the economy tier to serve a profitable segment. At lower WTP ($w \sim U[5,10]$), hardware costs dwarf buyer valuations, and the separation condition fails at $L_{\max} \geq 50$ for several $(\alpha, \beta)$ configurations---the extreme cost constraint leaves insufficient surplus for meaningful quality differentiation between tiers.

The mechanism's profitability is governed by the cost-to-WTP ratio rather than absolute scales. At single-GPU rates (\$2.29/hr for H100, \$3.79/hr for B200)---preserving the $1.66\times$ cost ratio but reducing absolute costs by $8\times$---the $w \sim U[5,10]$ range yields gains of 29--64\%, nearly matching the $w \sim U[50,100]$ gains at cluster rates because the cost-to-WTP ratio is comparable. This scale invariance implies that further scaling to larger clusters would shift the profitable region to proportionally higher per-request values, making tiered pricing increasingly relevant as inference infrastructure grows.

These results identify a design-relevant regime for GPU tiering: workloads with moderate compute intensity and meaningful latency sensitivity, where the premium tier's decode advantage creates differentiation without rendering the economy tier obsolete. More broadly, the numerical analysis validates the core theoretical prediction: hardware heterogeneity in latency-critical dimensions generates the quality differentiation necessary for profitable market segmentation through the screening mechanism of Section~\ref{sec:continuous}.

\paragraph{Sensitivity to Price Fluctuations}

To test robustness to price fluctuations, we additionally ran separation and profit experiments at single-GPU price points where each tier's hourly rate was independently scaled by $0.25\times$--$2.00\times$ its baseline, yielding 38 valid configurations at $(\alpha, \beta) = (0.30, 0.20)$ with $w \sim U[5,10]$. The separation condition holds in 135 of 185 checks (73.0\%), with failures concentrated at high price ratios ($\gtrsim 3\times$) and high latency budgets ($L_{\max} \geq 80$), though elevated absolute cost levels also erode separation even at moderate ratios. Two-tier profit gains are positive across all 38 configurations at $L_{\max} \geq 30$, with maximum gains of 24--67\% depending on the price pair; at $L_{\max} = 10$, the tight latency constraint favors single-tier pricing regardless of the price configuration.
\section{Conclusion}
\label{sec:conclusion}

Our central finding is that \emph{hardware heterogeneity has economic value beyond engineering efficiency}. Discrete hardware tiers enable screening on time preferences that would be impossible with homogeneous infrastructure: the quality-difference separation condition---driven by log-supermodularity of preferences and diminishing speedup in hardware performance---induces self-selection across tiers, reducing an intractable three-dimensional screening problem to standard one-dimensional screening within each tier (Theorem~\ref{thm:separation}). Numerically, the separation condition holds in 83\% of tested parameter configurations (96\% at economically relevant WTP scales). When hardware costs are a significant fraction of per-request value, the optimal two-tier mechanism yields profit gains of 26--66\% over the best single-tier alternative; at higher valuations, generational hardware improvements leave single-tier pricing optimal (Section~\ref{sec:numerical}). These results provide theoretical grounding for the tiered ``Provisioned Throughput'' offerings observed across cloud inference markets.

\paragraph{Limitations and Extensions}

Several assumptions merit scrutiny. Our independence assumption (Assumption~\ref{ass:independence}) rules out plausible correlations between willingness-to-pay and time sensitivity; correlated multi-dimensional screening would yield richer predictions about cross-subsidization. The monopolist assumption abstracts from competition, though the separation result derives from buyer preferences (single-crossing in $\gamma$) rather than market power and would constrain prices but not menu structure under competition. We also abstract from queueing dynamics and capacity constraints; incorporating stochastic demand would connect the framework to dynamic pricing and congestion externalities. Finally, our discrete-tier approximation with $K=2$ tiers yields substantial profit gains in favorable regimes (Remark~\ref{rem:few-tiers}), but characterizing optimal mechanisms with continuous tier choices remains open.

Our empirical calibration is limited to a single hardware pair (H100 and B200) with a modest decode-latency differential ($d_{B200}/d_{H100} = 0.68$). As discussed in Section~\ref{sec:profit-comparison}, tiering gains are largest when hardware costs constitute a significant fraction of per-request value and vanish when the premium tier uniformly dominates. Because generational GPU upgrades tend to improve all latency dimensions simultaneously, they compress the cost-performance tradeoff that sustains profitable tiering. Extending the mechanism to higher-WTP markets may require large GPU clusters or architecturally specialized hardware pairings which specifically optimize for inference latency, a prediction testable as inference-specific accelerators (e.g., memory-bandwidth-optimized decode chips) enter the market.

Three extensions are natural: (i)~learning models where buyers discover time preferences through usage, explaining observed tier-switching behavior; (ii)~including specialized inference hardware as the 'premium' hardware; and (iii)~updated latency functions for evolving architectures (speculative decoding, diffusion-based generation), which alter the cost function's form but preserve the throughput--latency tradeoff the separation theorem exploits.

\bibliographystyle{ACM-Reference-Format}
\bibliography{main,newbib}

\appendix

\section{Latency Function Estimation}
\label{app:empirical}

This appendix provides details on the estimation of the latency function~\eqref{eq:latency} used throughout the analysis.

\subsection{Computational Phases of LLM Inference}

LLM inference decomposes into two distinct computational phases with different resource constraints:

\paragraph{Prefill phase (compute-bound).} This phase processes the input context and generates the first output token. The primary computational burden is the self-attention mechanism, which requires $O(n^2)$ operations in the input sequence length $n$. For transformer architectures, attention computation dominates prefill time, particularly for long contexts. In continuous-batching inference systems (e.g., vLLM, TensorRT-LLM), prefill requests are processed individually to avoid interference between compute-bound operations. Consequently, prefill latency depends only on input token count $x$, not on batch size: $L_{\text{prefill}} = P(x) = ax^2 + bx + c$. The quadratic term captures attention computation, the linear term captures FFN and projection layers, and the constant captures fixed overhead (kernel launch, synchronization).

\paragraph{Decode phase (memory-bandwidth-bound).} This phase generates all subsequent tokens autoregressively. Each token generation requires loading model parameters from HBM---the dominant cost---plus accessing the KV-cache for attention. The base decode cost $d$ per output token reflects weight loading and is largely independent of batch size until compute saturation. The batch-dependent terms $eB$ and $fBx$ capture per-batch scheduling overhead and KV-cache attention costs that scale with both batch size and context length. Total decode time scales linearly in output tokens: $L_{\text{decode}} = y \cdot (d + eB + fBx)$.

The combined latency is:
\begin{equation}
    L = P(x) + y \cdot G(B, x) = (ax^2 + bx + c) + y(d + eB + fBx).
\end{equation}

\subsection{Simulation Methodology}

\paragraph{Hardware configurations.} We run Llama3-70B on two GPU generations:
\begin{itemize}
    \item \textbf{H100 (economy tier)}: 80GB HBM3, 3.35 TB/s memory bandwidth, 989 TFLOPS FP16
    \item \textbf{B200 (premium tier)}: 180GB HBM3e, 8 TB/s memory bandwidth, 2250 TFLOPS FP16
\end{itemize}

\paragraph{Parameter ranges.} We vary:
\begin{itemize}
    \item Batch size $B \in \{1, 2, 4, 8, 16, 32, 64\}$
    \item Input context $x \in \{1024, 2048, 4096, 8192\}$ tokens
\end{itemize}

\subsection{Regression Specification}

We fit the parametric form:
\begin{equation}
    L(B, x, y, g) = a_g x^2 + b_g x + c_g + y(d_g + e_g B + f_g Bx)
\end{equation}
via nonlinear least squares, minimizing:
\begin{equation}
    \sum_{(B, x, y) \in \mathcal{D}} \left( L_{\text{simulated}}(B, x, y) - L_{\text{model}}(B, x, y) \right)^2
\end{equation}
where $\mathcal{D}$ is the grid of simulated configurations.

\subsection{Estimated Parameters}

\begin{table}[h]
\centering
\begin{tabular}{lcc}
\toprule
Parameter & H100 ($g_L$) & B200 ($g_H$) \\
\midrule
$a_g$ (prefill quadratic) & $1.077 \times 10^{-9}$ & $3.530 \times 10^{-10}$ \\
$b_g$ (prefill linear) & $2.815 \times 10^{-5}$ & $1.243 \times 10^{-5}$ \\
$c_g$ (prefill constant) & $3.344 \times 10^{-2}$ & $5.210 \times 10^{-2}$ \\
$d_g$ (decode base) & $1.191 \times 10^{-2}$ & $8.147 \times 10^{-2}$ \\
$e_g$ (decode batch overhead) & $3.398 \times 10^{-5}$ & $3.843 \times 10^{-5}$ \\
$f_g$ (decode KV-cache attention) & $5.551 \times 10^{-9}$ & $ 6.848 \times 10^{-9}$ \\
\midrule
$R^2$ & $0.9951$ & $0.9967$ \\
\bottomrule
\end{tabular}
\caption{Estimated latency function parameters. All coefficients are in seconds.}
\label{tab:latency-params}
\end{table}

Both fits achieve $R^2 > 0.99$, indicating the parametric form captures the essential structure of inference latency across the operational range.
\section{Proof of Theorem \ref{thm:separation}}
\label{app:separation}

\begin{proof}
\textbf{Step 1: Incentive constraints.} Fix $w > 0$. For type $\gamma_H$ to prefer $g_H$:
\begin{equation}
    w \cdot q_{g_H}(\gamma_H) - T_{g_H} \geq w \cdot q_{g_L}(\gamma_H) - T_{g_L}.
\end{equation}

For type $\gamma_L$ to prefer $g_L$:
\begin{equation}
    w \cdot q_{g_L}(\gamma_L) - T_{g_L} \geq w \cdot q_{g_H}(\gamma_L) - T_{g_H}.
\end{equation}

\textbf{Step 2: Existence of separating prices.} Combining these constraints, prices satisfying both IC conditions exist if and only if:
\begin{equation}
    w[q_{g_H}(\gamma_L) - q_{g_L}(\gamma_L)] \leq T_{g_H} - T_{g_L} \leq w[q_{g_H}(\gamma_H) - q_{g_L}(\gamma_H)].
\end{equation}
This interval is non-empty precisely when condition~\eqref{eq:separation-condition} holds, which is independent of $w$. For any given $w$, any price gap $T_{g_H} - T_{g_L}$ in this interval achieves separation; in the optimal mechanism (Theorem~\ref{thm:optimal-menu}), the within-tier Mussa-Rosen schedules supply the requisite $w$-dependent transfers.
\end{proof}

\section{Scale-Up Domain Architecture as Tier Differentiation}
\label{app:scaleup}

Hardware tier differentiation extends beyond GPU generation (e.g., H100 vs.\ B200) to include \emph{scale-up domain architecture}---the manner in which multiple GPUs are interconnected within a single inference unit. Modern LLM inference employs \emph{tensor parallelism}: decomposing a single model's computation across multiple tightly-coupled GPUs that operate synchronously on different portions of the model's parameters. This enables strong scaling for individual inference tasks, distinct from the weak scaling achieved by replicating entire models across independent GPU clusters or pipeline parallelism (which spreads layers across devices but does not reduce latency for a single request).

The economic significance lies in the cost-performance tradeoff of interconnect technology. Consider 8 GPUs of identical specification: connected via standard PCIe buses (bandwidth ${\sim}32$ GB/s per link), cross-GPU communication creates severe bottlenecks that preclude effective tensor parallelism, forcing each GPU to operate independently on separate inference requests. Connected via high-bandwidth interconnects such as NVLink switches (bandwidth ${\sim}900$ GB/s bidirectional per link in NVLink 4.0), the same 8 GPUs can act as a coherent computational unit, partitioning the model's attention matrices and linear transformations across devices. For the computationally expensive attention phase---which dominates prefill latency and scales quadratically in input length---this parallelization delivers near-linear speedup: an 8-GPU tensor-parallel configuration can achieve approximately $5\times$ faster time-to-first-token compared to a single GPU processing the same request \citep{davies2025liminal}.

This scale-up capability comes at substantial capital cost: NVLink switches, high-speed interconnect fabrics (NVSwitch, InfiniBand), and the engineering integration required to maintain coherence across dozens of devices represent significant incremental investment beyond the GPUs themselves. Modern scale-up domains range from 8-GPU nodes (common in cloud offerings like AWS p5.48xlarge instances with $8 \times$ H100) to 72-GPU superblocks (Nvidia's DGX SuperPOD NVL72 architecture). Each additional tier of scale-up---8-GPU vs.\ 16-GPU vs.\ 72-GPU configurations---yields progressively lower per-request latency but at higher per-unit-time cost $c_g$, creating a natural hierarchy of hardware tiers differentiated by scale-up capacity rather than solely by GPU generation.

From a pricing perspective, these scale-up tiers provide an additional source of cost-latency tradeoff consistent with the model's structure: sellers invest in expensive interconnect infrastructure to offer low-latency inference that would be uneconomical with weak scaling alone. Buyers with high time preference $\gamma$ value these scale-up-enabled tiers disproportionately, since the latency reduction afforded by tensor parallelism compounds with their $\gamma$ exponent in the production function. Scale-up domain architecture thus exemplifies the \emph{architectural specialization} discussed in Section~\ref{sec:numerical}: tiers differentiated by infrastructure topology rather than GPU generation, creating complementary operating regimes that the separation theorem can exploit.

\section{Derivation of Cost Function Derivatives}
\label{app:cost-derivatives}

This appendix provides the detailed derivation of the first and second derivatives of the per-task cost function $\tilde{c}^*(q)$ used in the proof of Theorem~\ref{thm:cost-convexity}.

\subsection{Setup and Notation}

Recall the key definitions:
\begin{itemize}
    \item Latency function: $L = B \cdot D(x,y) + F(x,y)$ where
    \begin{align}
        D(x,y) &= y(e + fx) \\
        F(x,y) &= ax^2 + bx + c + yd
    \end{align}
    \item Latency slack: $\mu \equiv L_{max} - L$
    \item Production constraint: $x^\alpha y^\beta \mu^\gamma = q$
    \item Numerator function: $N \equiv L_{max} - F(x,y) - \mu$
\end{itemize}

From the production constraint, the required slack to achieve quality $q$ is:
\begin{equation}
    \mu = \left(\frac{q}{x^\alpha y^\beta}\right)^{1/\gamma} = q^{1/\gamma} x^{-\alpha/\gamma} y^{-\beta/\gamma}
    \label{eq:mu-constraint}
\end{equation}

From $L = BD + F$ and $\mu = L_{max} - L$, the batch size is determined by:
\begin{equation}
    B = \frac{L_{max} - \mu - F(x,y)}{D(x,y)} = \frac{N}{D}
    \label{eq:batch-size}
\end{equation}

The per-task cost is:
\begin{equation}
    \tilde{c} = \frac{c_g L}{B} = \frac{c_g L \cdot D}{N}
\end{equation}

Since $L = L_{max} - \mu$:
\begin{equation}
    \tilde{c} = \frac{c_g (L_{max} - \mu) D}{N}
    \label{eq:cost-form}
\end{equation}

\subsection{First Derivative (Marginal Cost)}

We apply the envelope theorem \citep{milgrom2002envelope}: at the optimum $(x^*, y^*)$, the derivative of the minimized cost with respect to $q$ equals the partial derivative of the cost function with respect to $q$, holding $(x, y)$ fixed at their optimal values.

\textbf{Step 1: Compute $\partial \mu / \partial q$.}

From equation~\eqref{eq:mu-constraint}:
\begin{equation}
    \frac{\partial \mu}{\partial q} = \frac{1}{\gamma} q^{1/\gamma - 1} x^{-\alpha/\gamma} y^{-\beta/\gamma} = \frac{1}{\gamma q} \cdot q^{1/\gamma} x^{-\alpha/\gamma} y^{-\beta/\gamma} = \frac{\mu}{\gamma q}
    \label{eq:dmu-dq}
\end{equation}

\textbf{Step 2: Compute $\partial N / \partial q$.}

Since $N = L_{max} - F(x,y) - \mu$ and $F(x,y)$ does not depend on $q$ (holding $(x,y)$ fixed):
\begin{equation}
    \frac{\partial N}{\partial q} = -\frac{\partial \mu}{\partial q} = -\frac{\mu}{\gamma q}
    \label{eq:dN-dq}
\end{equation}

\textbf{Step 3: Apply the quotient rule to $\tilde{c}$.}

From equation~\eqref{eq:cost-form}, with $D$ and $F$ treated as constants (holding $(x,y)$ fixed):
\begin{align}
    \frac{\partial \tilde{c}}{\partial q} &= c_g D \cdot \frac{\partial}{\partial q}\left[\frac{L_{max} - \mu}{N}\right] \\
    &= c_g D \left[ \frac{-\frac{\partial \mu}{\partial q} \cdot N - (L_{max} - \mu) \cdot \frac{\partial N}{\partial q}}{N^2} \right]
\end{align}

Substituting equations~\eqref{eq:dmu-dq} and~\eqref{eq:dN-dq}:
\begin{align}
    \frac{\partial \tilde{c}}{\partial q} &= c_g D \left[ \frac{-\frac{\mu}{\gamma q} \cdot N - (L_{max} - \mu) \cdot \left(-\frac{\mu}{\gamma q}\right)}{N^2} \right] \\
    &= c_g D \left[ \frac{-\frac{\mu N}{\gamma q} + \frac{(L_{max} - \mu)\mu}{\gamma q}}{N^2} \right] \\
    &= \frac{c_g D \mu}{\gamma q N^2} \left[ -N + (L_{max} - \mu) \right]
\end{align}

\textbf{Step 4: Simplify using the definition of $N$.}

Since $N = L_{max} - F(x,y) - \mu$:
\begin{equation}
    (L_{max} - \mu) - N = (L_{max} - \mu) - (L_{max} - F(x,y) - \mu) = F(x,y)
\end{equation}

Therefore:
\begin{equation}
    \boxed{\frac{\partial \tilde{c}}{\partial q} = \frac{c_g D \mu F}{\gamma q N^2}}
    \label{eq:first-derivative}
\end{equation}

Since $c_g > 0$, $D = y(e + fx) > 0$ for $y > 0$ and $e + f > 0$, $\mu > 0$, $F = ax^2 + bx + c + yd > 0$ for $x, y > 0$, $\gamma > 0$, $q > 0$, and $N > 0$ on the feasible region, we have $\partial \tilde{c} / \partial q > 0$. \qed

\subsection{Second Derivative (Convexity)}

We differentiate equation~\eqref{eq:first-derivative} with respect to $q$.

\textbf{Step 1: Rewrite the first derivative.}

\begin{equation}
    \frac{\partial \tilde{c}}{\partial q} = \frac{c_g D F}{\gamma} \cdot \frac{\mu}{q N^2}
\end{equation}

Since $D$ and $F$ are held constant (functions of $(x,y)$ only), we need:
\begin{equation}
    \frac{\partial^2 \tilde{c}}{\partial q^2} = \frac{c_g D F}{\gamma} \cdot \frac{\partial}{\partial q}\left[\frac{\mu}{q N^2}\right]
\end{equation}

\textbf{Step 2: Apply the quotient rule.}

Let $u = \mu$ and $v = q N^2$. Then:
\begin{equation}
    \frac{\partial}{\partial q}\left[\frac{u}{v}\right] = \frac{\frac{\partial u}{\partial q} \cdot v - u \cdot \frac{\partial v}{\partial q}}{v^2}
\end{equation}

We have:
\begin{align}
    \frac{\partial u}{\partial q} &= \frac{\partial \mu}{\partial q} = \frac{\mu}{\gamma q} \\
    \frac{\partial v}{\partial q} &= \frac{\partial}{\partial q}[q N^2] = N^2 + q \cdot 2N \cdot \frac{\partial N}{\partial q} = N^2 + 2qN \cdot \left(-\frac{\mu}{\gamma q}\right) = N^2 - \frac{2\mu N}{\gamma}
\end{align}

\textbf{Step 3: Compute the numerator.}

\begin{align}
    \frac{\partial u}{\partial q} \cdot v - u \cdot \frac{\partial v}{\partial q} &= \frac{\mu}{\gamma q} \cdot q N^2 - \mu \cdot \left(N^2 - \frac{2\mu N}{\gamma}\right) \\
    &= \frac{\mu N^2}{\gamma} - \mu N^2 + \frac{2\mu^2 N}{\gamma} \\
    &= \mu N^2 \left(\frac{1}{\gamma} - 1\right) + \frac{2\mu^2 N}{\gamma} \\
    &= \frac{\mu N^2 (1 - \gamma)}{\gamma} + \frac{2\mu^2 N}{\gamma} \\
    &= \frac{\mu N}{\gamma} \left[ (1-\gamma)N + 2\mu \right]
\end{align}

\textbf{Step 4: Assemble the second derivative.}

\begin{align}
    \frac{\partial}{\partial q}\left[\frac{\mu}{q N^2}\right] &= \frac{\frac{\mu N}{\gamma} \left[ (1-\gamma)N + 2\mu \right]}{q^2 N^4} \\
    &= \frac{\mu \left[ (1-\gamma)N + 2\mu \right]}{\gamma q^2 N^3}
\end{align}

Therefore:
\begin{equation}
    \boxed{\frac{\partial^2 \tilde{c}}{\partial q^2} = \frac{c_g D F \mu \left[ (1-\gamma)N + 2\mu \right]}{\gamma^2 q^2 N^3}}
    \label{eq:second-derivative}
\end{equation}

\textbf{Step 5: Verify positivity.}

All factors in the numerator and denominator are positive on the feasible region:
\begin{itemize}
    \item $c_g > 0$ (GPU cost rate)
    \item $D = y(e + fx) > 0$ for $x, y > 0$ (since $e + f > 0$)
    \item $F = ax^2 + bx + c + yd > 0$ for $x, y > 0$ (since $c, d > 0$)
    \item $\mu > 0$ (positive latency slack required for positive quality)
    \item $(1-\gamma) > 0$ since $\gamma < 1$ (by the constraint $\alpha + \beta + \gamma < 1$)
    \item $N > 0$ (required for positive batch size; see equation~\eqref{eq:batch-size})
    \item $\gamma > 0$, $q > 0$
\end{itemize}

Therefore $\partial^2 \tilde{c} / \partial q^2 > 0$, establishing strict convexity of $\tilde{c}^*(q)$. \qed

\section{Proof of Proposition \ref{prop:optimal-transfers}}
\label{app:proof-transfers}
\begin{proof}
By the envelope theorem, for IC to hold with equality at adjacent types:
\begin{equation}
    \frac{dU^*(w)}{dw} = q^*(w)
\end{equation}
where $U^*(w) = w \cdot q^*(w) - T^*(w)$ is the equilibrium utility.

Integrating from the exclusion threshold where $U^*(\underline{w}) = 0$:
\begin{equation}
    U^*(w) = \int_{\underline{w}}^{w} q^*(k) \, dk
\end{equation}

Solving for transfers:
\begin{equation}
    T^*(w) = w \cdot q^*(w) - U^*(w) = w \cdot q^*(w) - \int_{\underline{w}}^{w} q^*(k) \, dk
\end{equation}
\end{proof}

\section{Proof of Proposition \ref{prop:scale-ic}}
\label{app:proof-scale-ic}
\begin{proof}
\textbf{Downward deviations.} Type $(w, s)$ reporting $\tilde{s} < s$ receives utility $\tilde{s} \cdot \tilde{u}^*(w) < s \cdot \tilde{u}^*(w)$ since $\tilde{u}^*(w) \geq 0$.

\textbf{Upward deviations.} Type $(w, s)$ reporting $(\tilde{w}, \tilde{s})$ with $\tilde{s} > s$ receives $\tilde{s}$ allocations but values only $s$ tasks. The optimal deviation sets $\tilde{w}^* = ws/\tilde{s}$ (from the envelope condition), yielding payoff $\tilde{s} \cdot \tilde{u}^*(ws/\tilde{s})$. Since $\tilde{u}(w) = \int_{\underline{w}}^{w} q(k) , dk$ is convex with $\tilde{u}(\underline{w}) = 0$ (extending by zero below $\underline{w}$), convexity gives $\tilde{u}(ws/\tilde{s}) \leq (s/\tilde{s}) \cdot \tilde{u}^*(w)$. Hence the deviation payoff is at most $s \cdot \tilde{u}^*(w)$, the truthful utility.
\end{proof}

\section{Proof of Proposition~\ref{prop:partition-existence}}
\label{app:proof-partition}

\begin{proof}
\textbf{Step 1: Monotone selection.} By Assumption~\ref{ass:quality-separation}, the quality gain $\Delta q_k(\gamma) \equiv q_{g_{k+1}}(\gamma) - q_{g_k}(\gamma)$ is strictly increasing in $\gamma$. Consider type $\gamma'$ who prefers $g_{k+1}$ over $g_k$, i.e., the quality gain justifies the price premium:
\begin{equation}
    w \cdot \Delta q_k(\gamma') \geq T_{g_{k+1}} - T_{g_k}.
\end{equation}
For any $\gamma'' > \gamma'$, we have $\Delta q_k(\gamma'') > \Delta q_k(\gamma')$ by Assumption~\ref{ass:quality-separation}, so:
\begin{equation}
    w \cdot \Delta q_k(\gamma'') > w \cdot \Delta q_k(\gamma') \geq T_{g_{k+1}} - T_{g_k}.
\end{equation}
Hence $\gamma''$ also strictly prefers $g_{k+1}$ over $g_k$. This establishes monotone selection: if any type prefers the higher tier, all higher types do as well.

\textbf{Step 2: Existence of cutoffs.} Fix tier prices $(T_{g_1}, \ldots, T_{g_K})$ with $T_{g_1} < T_{g_2} < \cdots < T_{g_K}$. Define the indifference function for adjacent tiers:
\begin{equation}
    \Phi_k(\gamma) \equiv w \cdot \Delta q_k(\gamma) - (T_{g_{k+1}} - T_{g_k}).
\end{equation}

By Assumption~\ref{ass:quality-separation}, $\Phi_k(\gamma)$ is strictly increasing in $\gamma$. At $\gamma = \underline{\gamma}$, $\Phi_k(\underline{\gamma})$ may be negative (low-$\gamma$ types prefer economy tier). At $\gamma = \bar{\gamma}$, $\Phi_k(\bar{\gamma})$ may be positive (high-$\gamma$ types prefer premium tier). By the intermediate value theorem, there exists a unique $\gamma_k^* \in [\underline{\gamma}, \bar{\gamma}]$ such that $\Phi_k(\gamma_k^*) = 0$, with:
\begin{itemize}
    \item Types $\gamma < \gamma_k^*$ have $\Phi_k(\gamma) < 0$: prefer $g_k$;
    \item Types $\gamma > \gamma_k^*$ have $\Phi_k(\gamma) > 0$: prefer $g_{k+1}$;
    \item Type $\gamma_k^*$ is indifferent.
\end{itemize}

\textbf{Step 3: Partition structure.} Applying Step 2 iteratively for $k = 1, \ldots, K-1$ yields cutoffs $\gamma_1^* < \gamma_2^* < \cdots < \gamma_{K-1}^*$. The ordering follows from monotone selection: if $\gamma_k^* \geq \gamma_{k+1}^*$, then types in $[\gamma_{k+1}^*, \gamma_k^*]$ would simultaneously prefer $g_k$ over $g_{k-1}$ and $g_{k+1}$ over $g_k$, but also prefer $g_{k+2}$ over $g_{k+1}$---contradicting the definition of $\gamma_{k+1}^*$ as the indifference point between $g_{k+1}$ and $g_{k+2}$.
\end{proof}

\section{Proof of Lemma~\ref{lem:avg-cost-properties}}
\label{app:proof-avg-cost}

\begin{proof}
\textbf{Property 1 (Strict convexity):} A convex combination (expectation) of strictly convex functions is strictly convex. For any $q_1 \neq q_2$ and $\lambda \in (0,1)$:
\begin{equation}
    \tilde{c}^*(\lambda q_1 + (1-\lambda) q_2; g_k, \gamma) < \lambda \tilde{c}^*(q_1; g_k, \gamma) + (1-\lambda) \tilde{c}^*(q_2; g_k, \gamma)
\end{equation}
holds for each $\gamma$ by Theorem~\ref{thm:cost-convexity}. Integrating over $\gamma \in \Gamma_k$ preserves the strict inequality since the integrand is strictly negative on a set of positive measure.

\textbf{Property 2 (Boundary behavior):} For each $\gamma$, we have $\tilde{c}^{*\prime}(0; g_k, \gamma) = 0$ and $\tilde{c}^{*\prime}(q; g_k, \gamma) \to \infty$ as $q \to \bar{q}(g_k, \gamma)$ by Theorem~\ref{thm:cost-convexity}. Taking expectations:
\begin{equation}
    \tilde{c}_{\mathrm{avg}}'(0; g_k, \boldsymbol{\gamma}_k) = \frac{1}{\Pr(\Gamma_k)} \int_{\gamma_{k-1}}^{\gamma_k} \tilde{c}^{*\prime}(0; g_k, \gamma) \, f_\gamma(\gamma) \, d\gamma = 0.
\end{equation}
For the upper bound, as $q$ approaches the minimum feasibility bound across $\gamma \in \Gamma_k$, at least some types have $\tilde{c}^{*\prime}(q; g_k, \gamma) \to \infty$, so the average diverges.

\textbf{Property 3 (Continuity in cutoffs):} Fix $q$ in the interior of the feasible domain. The function $\tilde{c}^*(q; g_k, \gamma)$ is continuous in $\gamma$ (by continuity of the cost-minimization problem in parameters). The averaged cost is:
\begin{equation}
    \tilde{c}_{\mathrm{avg}}(q; g_k, \boldsymbol{\gamma}_k) = \frac{\int_{\gamma_{k-1}}^{\gamma_k} \tilde{c}^*(q; g_k, \gamma) \, f_\gamma(\gamma) \, d\gamma}{F_\gamma(\gamma_k) - F_\gamma(\gamma_{k-1})}.
\end{equation}
Both numerator and denominator are continuous in $(\gamma_{k-1}, \gamma_k)$ by standard results on parameter-dependent integrals. The denominator is bounded away from zero for $\gamma_k > \gamma_{k-1}$ (and we consider only valid partitions). Hence $\tilde{c}_{\mathrm{avg}}$ is continuous in the cutoffs.
\end{proof}

\section{Proof of Theorem~\ref{thm:optimal-cutoffs}: Necessary Condition}
\label{app:proof-cutoffs-a}

\begin{proof}
The seller's total expected profit is:
\begin{equation}
    \Pi(\boldsymbol{\gamma}) = \sum_{k=1}^{K} \int_{\gamma_{k-1}}^{\gamma_k} \int_{\underline{w}}^{\bar{w}} \left[ T_k^*(w; \boldsymbol{\gamma}) - \tilde{c}^*(q_k^*(w; \boldsymbol{\gamma}); g_k, \gamma) \right] f_w(w) \, f_\gamma(\gamma) \, dw \, d\gamma,
\end{equation}
where the within-tier mechanisms $(q_k^*, T_k^*)$ are optimal for the Mussa-Rosen problem with cost $\tilde{c}_{\mathrm{avg}}(\cdot; g_k, \boldsymbol{\gamma}_k)$, and hence depend on the cutoffs $\boldsymbol{\gamma}$.

For segment $k$, write the contribution as:
\begin{equation}
    V_k(\boldsymbol{\gamma}) = \int_{\gamma_{k-1}}^{\gamma_k} \int_{\underline{w}}^{\bar{w}} \left[ T_k^*(w) - \tilde{c}^*(q_k^*(w); g_k, \gamma) \right] f_w(w) \, f_\gamma(\gamma) \, dw \, d\gamma.
\end{equation}

A marginal increase $d\gamma_k > 0$ has two effects:

\emph{(i) Boundary effect.} By Leibniz's rule:
\begin{equation}
    \left. \frac{\partial V_k}{\partial \gamma_k} \right|_{\text{boundary}} = f_\gamma(\gamma_k) \int_{\underline{w}}^{\bar{w}} \left[ T_k^*(w) - \tilde{c}^*(q_k^*(w); g_k, \gamma_k) \right] f_w(w) \, dw = f_\gamma(\gamma_k) \cdot \mathbb{E}_w[\Pi_{g_k}(w, \gamma_k)].
\end{equation}

\emph{(ii) Cost-averaging effect.} Changing $\gamma_k$ alters $\tilde{c}_{\mathrm{avg}}$, which shifts the within-tier mechanism. By the envelope theorem applied to the within-tier optimization, the response of $(q_k^*, T_k^*)$ to changes in the cost function does not contribute at the margin (they are at the optimum). However, the direct change in $\tilde{c}_{\mathrm{avg}}$ affects the profit integrand. We have:
\begin{equation}
    \frac{\partial \tilde{c}_{\mathrm{avg}}(q; g_k, \boldsymbol{\gamma}_k)}{\partial \gamma_k} = \frac{f_\gamma(\gamma_k)}{\Pr(\Gamma_k)} \left[ \tilde{c}^*(q; g_k, \gamma_k) - \tilde{c}_{\mathrm{avg}}(q; g_k, \boldsymbol{\gamma}_k) \right].
\end{equation}

The cost-averaging effect contributes:
\begin{equation}
    \left. \frac{\partial V_k}{\partial \gamma_k} \right|_{\text{avg}} = -\int_{\gamma_{k-1}}^{\gamma_k} \int_{\underline{w}}^{\bar{w}} \frac{f_\gamma(\gamma_k)}{\Pr(\Gamma_k)} \left[ \tilde{c}^*(q_k^*; g_k, \gamma_k) - \tilde{c}_{\mathrm{avg}}(q_k^*; g_k) \right] f_w(w) \, f_\gamma(\gamma) \, dw \, d\gamma.
\end{equation}

Combining (i) and (ii), the $\tilde{c}_{\mathrm{avg}}$ terms cancel, yielding:
\begin{equation}
    \frac{\partial V_k}{\partial \gamma_k} = f_\gamma(\gamma_k) \cdot \mathbb{E}_w[\Pi_{g_k}(w, \gamma_k)].
\end{equation}

An analogous calculation for segment $k+1$ (whose lower bound is $\gamma_k$) gives:
\begin{equation}
    \frac{\partial V_{k+1}}{\partial \gamma_k} = -f_\gamma(\gamma_k) \cdot \mathbb{E}_w[\Pi_{g_{k+1}}(w, \gamma_k)].
\end{equation}

The total derivative is:
\begin{equation}
    \frac{\partial \Pi}{\partial \gamma_k} = f_\gamma(\gamma_k) \cdot \mathbb{E}_w[\Pi_{g_k}(w, \gamma_k) - \Pi_{g_{k+1}}(w, \gamma_k)].
\end{equation}
Setting $\partial \Pi / \partial \gamma_k = 0$ yields the indifference condition~\eqref{eq:indifference}.
\end{proof}

\section{Proof of Theorem~\ref{thm:optimal-cutoffs}: Existence and Sufficiency}
\label{app:proof-cutoffs-bc}

\begin{proof}
\textbf{Part (b): Existence via Brouwer's fixed point theorem.}

Define the cutoff space $\mathcal{C} = [\underline{\gamma}, \bar{\gamma}]^{K-1}$, which is compact and convex. For degenerate cutoff vectors with $\gamma_{k-1} = \gamma_k$, the averaged cost is extended continuously via $\tilde{c}_{\mathrm{avg}}(q; g_k, (\gamma_k, \gamma_k)) \equiv \tilde{c}^*(q; g_k, \gamma_k)$, ensuring the within-tier Mussa-Rosen problem remains well-defined.

For any $\boldsymbol{\gamma} \in \mathcal{C}$, the within-tier mechanisms $(q_k^*, T_k^*)$ are uniquely determined by the Mussa-Rosen solution with cost $\tilde{c}_{\mathrm{avg}}(\cdot; g_k, \boldsymbol{\gamma}_k)$. Define the profit-difference function:
\begin{equation}
    \Delta_k(\gamma; \boldsymbol{\gamma}) \equiv \mathbb{E}_w[\Pi_{g_{k+1}}(w, \gamma; \boldsymbol{\gamma}) - \Pi_{g_k}(w, \gamma; \boldsymbol{\gamma})],
\end{equation}
where we make explicit the dependence on the full cutoff vector $\boldsymbol{\gamma}$ through the within-tier mechanisms.

For each $k$, define the best-response cutoff:
\begin{equation}
    \hat{\gamma}_k(\boldsymbol{\gamma}) =
    \begin{cases}
        \underline{\gamma} & \text{if } \Delta_k(\underline{\gamma}; \boldsymbol{\gamma}) \geq 0, \\
        \bar{\gamma} & \text{if } \Delta_k(\bar{\gamma}; \boldsymbol{\gamma}) \leq 0, \\
        \gamma \text{ s.t. } \Delta_k(\gamma; \boldsymbol{\gamma}) = 0 & \text{otherwise}.
    \end{cases}
\end{equation}

By Assumption~\ref{ass:quality-separation} and the argument in Proposition~\ref{prop:partition-existence}, $\Delta_k(\gamma; \boldsymbol{\gamma})$ is strictly increasing in $\gamma$ for fixed $\boldsymbol{\gamma}$, so the root (if interior) is unique. Since each $\hat{\gamma}_k(\boldsymbol{\gamma}) \in [\underline{\gamma}, \bar{\gamma}]$ by construction, the mapping $\mathcal{T}: \mathcal{C} \to \mathcal{C}$ defined by $\mathcal{T}(\boldsymbol{\gamma}) = (\hat{\gamma}_1(\boldsymbol{\gamma}), \ldots, \hat{\gamma}_{K-1}(\boldsymbol{\gamma}))$ maps $\mathcal{C}$ to itself.

\emph{Continuity of $\mathcal{T}$:} By Lemma~\ref{lem:avg-cost-properties} (Property 3), $\tilde{c}_{\mathrm{avg}}$ is continuous in the cutoffs. The Mussa-Rosen solution $(q_k^*, T_k^*)$ depends continuously on the cost function (by standard results on parametric optimization), so the profit functions $\Pi_g(w, \gamma; \boldsymbol{\gamma})$ are continuous in $\boldsymbol{\gamma}$. The expectation $\Delta_k(\gamma; \boldsymbol{\gamma})$ inherits this continuity. The root of a strictly monotone continuous function varies continuously with parameters, so $\hat{\gamma}_k(\boldsymbol{\gamma})$ is continuous. Hence $\mathcal{T}$ is continuous.

By Brouwer's fixed point theorem, $\mathcal{T}$ has a fixed point $\boldsymbol{\gamma}^* = \mathcal{T}(\boldsymbol{\gamma}^*)$. At this fixed point, either $\gamma_k^*$ is at a boundary (and the FOC need not hold) or $\Delta_k(\gamma_k^*; \boldsymbol{\gamma}^*) = 0$, which is the indifference condition~\eqref{eq:indifference}.

\emph{Ordering at the fixed point.} For $K = 2$, the ordering constraint is vacuous ($\mathcal{C} = [\underline{\gamma}, \bar{\gamma}]$). For $K \geq 3$, a fixed point $\boldsymbol{\gamma}^*$ may have $\gamma_k^* \geq \gamma_{k+1}^*$ for some $k$, corresponding to a degenerate partition in which tier $g_{k+1}$ serves an empty segment and is effectively inactive. Such solutions reduce to partitions with fewer than $K$ active tiers. Non-degenerate fixed points with all tiers active satisfy $\gamma_1^* < \cdots < \gamma_{K-1}^*$ and the indifference conditions~\eqref{eq:indifference} hold at each boundary.

\textbf{Part (c): Second-order condition.}

The Hessian of $\Pi$ with respect to $\boldsymbol{\gamma}$ at an interior solution has diagonal entries:
\begin{equation}
    \frac{\partial^2 \Pi}{\partial \gamma_k^2} = f_\gamma'(\gamma_k) \cdot \mathbb{E}_w[\Pi_{g_k} - \Pi_{g_{k+1}}] + f_\gamma(\gamma_k) \cdot \frac{\partial}{\partial \gamma_k} \mathbb{E}_w[\Pi_{g_k} - \Pi_{g_{k+1}}].
\end{equation}

At the FOC where $\mathbb{E}_w[\Pi_{g_k} - \Pi_{g_{k+1}}] = 0$, this simplifies to:
\begin{equation}
    \frac{\partial^2 \Pi}{\partial \gamma_k^2} = f_\gamma(\gamma_k) \cdot \frac{\partial}{\partial \gamma_k} \mathbb{E}_w[\Pi_{g_k}(w, \gamma_k) - \Pi_{g_{k+1}}(w, \gamma_k)].
\end{equation}

Since the within-tier mechanisms $(q_k^*, T_k^*)$ do not depend on $\gamma$ (only on the cutoffs through $\tilde{c}_{\mathrm{avg}}$), this derivative reduces to:
\begin{equation}
    \frac{\partial}{\partial \gamma} \mathbb{E}_w[\Pi_{g_k} - \Pi_{g_{k+1}}] = \mathbb{E}_w \left[ \frac{\partial \tilde{c}^*(q_{k+1}^*; g_{k+1}, \gamma)}{\partial \gamma} - \frac{\partial \tilde{c}^*(q_k^*; g_k, \gamma)}{\partial \gamma} \right].
\end{equation}

For local concavity (SOC for a maximum), we require $\partial^2 \Pi / \partial \gamma_k^2 \leq 0$, which holds when condition~\eqref{eq:SOC} is satisfied.
\end{proof}

\section{AI Use Disclosure}

We utilized Large Language Models (Gemini 3 Pro/Claude Opus 4.6) to act as an adversarial critic during the drafting process, specifically to stress-test the economic logic of the separation theorem and to identify potential gaps in the related work. The tools were also used for proof-verification assistance and stylistic refinement. All final text, mathematical derivations, and citations were verified and authored by us.

\end{document}